\documentclass[12pt, draftclsnofoot, onecolumn]{IEEEtran}
\usepackage{graphicx}
\usepackage{amsthm}
\usepackage{epsfig}
\usepackage{latexsym}
\usepackage{amsfonts}
\usepackage{here}
\usepackage{rawfonts}
\usepackage[latin1]{inputenc}
\usepackage[T1]{fontenc}
\usepackage{calc}
\usepackage{capitalgreekitalic}
\usepackage{url}
\usepackage{enumerate}
\usepackage{color}
\usepackage[tbtags]{amsmath}
\usepackage{amssymb}
\usepackage{upref}
\usepackage{epic,eepic}
\usepackage{times}
\usepackage{dsfont}
\usepackage{comment}
\usepackage{cite}
\usepackage{bbm} 

\newcommand{\diag}{\ensuremath{\operatorname{diag}}}


\newtheorem{corollary}{Corollary}
\newtheorem{theorem}{\bf Theorem}
\newtheorem{proposition}{\bf Proposition}

\usepackage{dsfont}

\newcounter{step}
\newlength{\totlinewidth}
\newenvironment{algorithm}{%
  \rule{\linewidth}{1pt}
  \begin{list}{}%
    {\usecounter{step}%
      \settowidth{\labelwidth}{\textbf{Step 2:}}%
      \setlength{\leftmargin}{\labelwidth}%
      \setlength{\topsep}{-2pt}%
      \addtolength{\leftmargin}{\labelsep}%
      \addtolength{\leftmargin}{2mm}%
      \setlength{\rightmargin}{2mm}%
      \setlength{\totlinewidth}{\linewidth}%
      \addtolength{\totlinewidth}{\leftmargin}%
      \addtolength{\totlinewidth}{\rightmargin}%
      \setlength{\parsep}{0mm}%
      \raggedright}}%
  {\end{list}%
  \rule{\linewidth}{1pt}}
\newcounter{substep}

  {\end{list}}

\newlength{\aligntop}
\setlength{\aligntop}{-0.53em}
\newlength{\alignbot}
\setlength{\alignbot}{-0.85\baselineskip}
\addtolength{\alignbot}{-0.1em} \makeatletter
\renewenvironment{align}{%
  \vspace{\aligntop}
  \start@align\@ne\st@rredfalse\m@ne
}{%
  \math@cr \black@\totwidth@
  \egroup
  \ifingather@
    \restorealignstate@
    \egroup
    \nonumber
    \ifnum0=`{\fi\iffalse}\fi
  \else
    $$%
  \fi
  \ignorespacesafterend%
  \vspace{\alignbot}\par\noindent
} \makeatother

\IEEEoverridecommandlockouts

\usepackage{algorithm}
\usepackage{algorithmic}

\usepackage{subfigure}


\begin{document}
\title{\LARGE Federated Echo State Learning for Minimizing Breaks in Presence in Wireless Virtual Reality Networks}


\author{{Mingzhe Chen}, Omid Semiari, \emph{Member, IEEE}, Walid Saad, \emph{Fellow, IEEE}, Xuanlin Liu, Changchuan Yin, \emph{Senior Member, IEEE}, \vspace*{-2em}\\ 

%
\thanks{M. Chen, X. Liu, and C. Yin are with the Beijing Key Laboratory of Network System Architecture and Convergence,
Beijing University of Posts and Telecommunications, Beijing, 100876, China, Emails: \protect{xuanlin.liu@bupt.edu.cn, ccyin@ieee.org}.}
\thanks{M. Chen is also with the Chinese University of Hong Kong, Shenzhen, 518172, China, and with the Department of Electrical Engineering, Princeton University, Princeton, NJ, 08544, USA, Email: \protect{mingzhec@princeton.edu}.}
\thanks{O. Semiari is with the Department of Electrical and Computer Engineering, University of Colorado Colorado Springs, Colorado Springs, CO, 80918, USA, Email: \protect{osemiari@uccs.edu}.}
\thanks{W. Saad is with the Wireless@VT, Bradley Department of Electrical and Computer Engineering, Virginia Tech, Blacksburg, VA, 24060, USA, Email: \protect{walids@vt.edu}.}
 }

\maketitle

%

\maketitle

\vspace{0cm}
\begin{abstract}
In this paper, the problem of enhancing the virtual reality (VR) experience for wireless users is investigated by minimizing the occurrence of breaks in presence (BIP) that can detach the users from their virtual world. To measure the BIP for wireless VR users, a novel model that jointly considers the VR application type, transmission delay, VR video quality, and users' awareness of the virtual environment is proposed. In the developed model, the base stations (BSs) transmit VR videos to the wireless VR users using directional transmission links so as to provide high data rates for the VR users, thus, reducing the number of BIP for each user. Since the body movements of a VR user may result in a blockage of its wireless link, the location and orientation of VR users must also be considered when minimizing BIP. 
The BIP minimization problem is formulated as  an optimization problem which jointly considers the predictions of users' locations, orientations, and their BS association.   
 To predict the orientation and locations of VR users, a distributed learning algorithm based on the machine learning framework of deep (ESNs) is proposed. The proposed algorithm uses concept from \emph{federated learning} to enable multiple BSs to locally train their deep ESNs using their collected data and cooperatively build a learning model to predict the entire users' locations and orientations. Using these predictions, the user association policy that minimizes BIP is derived.
  Simulation results demonstrate that the developed algorithm reduces the users' BIP by up to 16\% and 26\%, respectively, compared  to centralized ESN  and deep learning algorithms. 
\end{abstract}

\vspace{0cm}

{\renewcommand{\thefootnote}{\fnsymbol{footnote}}
\footnotetext{A preliminary version of this work was published in the IEEE GLOBECOM conference \cite{BIPchen}.}}
\section{Introduction}

{Deploying virtual reality (VR) applications over wireless networks is an essential stepping stone towards flexible deployment of pervasive VR applications \cite{saad2019vision}. However, to enable a seamless and immersive wireless VR experience, it is necessary to introduce novel wireless networking solutions that can meet stringent quality-of-service (QoS) requirements of VR applications \cite{bacstuug2016towards}.   
In wireless VR, any sudden drops in the data rate or increase in the delay can negatively impact the users' VR experience (e.g., due to interruptions in VR video streams). Due to such an interruption in the virtual world, VR users will experience \emph{breaks in presence (BIP)} events that can be detrimental to their immersive VR experience. While the fifth-generation (5G) new radio supports operation at high frequency bands as well as flexible frame structure to minimize latency, the performance of communication links at high frequencies is highly prone to blockage. That is, if an object blocks the wireless link between the BS and a VR user, the data rate can drop significantly and lead to a BIP. 
In addition to wireless factors such as delay and data rate, behavioral metrics related to each VR user such as the user's \emph{awareness} can also induce BIP. {\color{black}Awareness is defined as each wireless VR user's perceptions and actions in its individual VR environment.}  Therefore, to minimize the BIP of VR users, it is necessary to jointly consider all of the wireless environment and user-specific metrics that cause BIP, such as link blockage, user location, user orientation, user association, and user awareness. }

Recently, several works have studied a number of problems related to wireless VR networks\cite{7997740,sun2018communication,park2018urllc,8319985,kasgari2018human,8377419,8094203,8395443}. {\color{black}The work in\cite{7997740} developed a multipath cooperative route scheme to enable  VR wireless transmissions.} In \cite{sun2018communication}, the authors develop a framework for mobile VR delivery by leveraging the caching and computing capabilities of mobile VR devices. The authors in \cite{park2018urllc} study the problem of supporting visual and haptic perceptions over wireless cellular networks. A communications-constrained mobile edge computing framework is proposed in \cite{8319985} to reduce wireless resource consumption.
The work in \cite{kasgari2018human} proposes a concrete measure for the delay perception of VR users. The authors in \cite{8377419} present a scheme of proactive computing and high-frequency, millimeter wave (mmWave) \cite{8642794} transmission for wireless VR networks.
In \cite{8094203}, the authors design several experiments for quantifying the performance of tile-based $360^\circ$ video streaming over a real cellular network. Our previous work in \cite{8395443} studied the problems of resource allocation and $360^\circ$ content transmission. However, most of these existing works do not provide a comprehensive BIP model that accounts for the transmission delay, the quality of VR videos, VR application type, and user awareness. Moreover, the prior art in\cite{7997740,sun2018communication,park2018urllc,kasgari2018human,8094203,8395443,8319985,8377419} does not jointly consider the impact of the users' body movements when using mmWave communications. 


To address this challenge, machine learning techniques can be used to predict the users' movements and proactively determine the user associations that can minimize BIP.
 However, in prior works on machine learning for user movement predictions \cite{8372711,8219722,6503981,chen2016caching,8525324},
  the data for each user's movement must be collected by its associated BS.
 However, in real mobile VR scenarios, users will move and change their association and the data related to their movement is dispersed across multiple BSs. {\color{black}In such scenarios, the BSs may not be able to continuously share collected user data among each other}, due to the high overhead of data transmission. Moreover, sending all the information to a centralized processing server will cause very large delays that cannot be tolerated by VR applications. Thus, centralized machine learning algorithms such as in \cite{8372711,8219722,6503981,chen2016caching,8525324} will not be useful to predict real-time movements of the VR users.
To this end, \emph{a distributed learning framework that can be trained by the collected data at each BS is needed.} 

{\color{black} Recently, a number of existing works such as in \cite{bonawitz2019towards,konevcny2016federated,konevcny2015federated,samarakoon2018distributed} studied important problems related to the implementation of distributed learning over wireless networks. 
While interesting, these prior works \cite{bonawitz2019towards,konevcny2016federated,konevcny2015federated,samarakoon2018distributed} that focus on the optimization of the performance of distributed learning algorithms such as federated learning do not consider the use of distributed learning to optimize the performance of wireless networks. In particular, these existing works \cite{bonawitz2019towards,konevcny2016federated,konevcny2015federated,samarakoon2018distributed} do not consider the use of distributed learning algorithms to predict users' orientations and locations to reduce the BIP of wireless VR users. 
{\color{black}Note that, in \cite{BIPchen}, we have studied the use of a single-layer echo state network (ESN) model with federated learning for orientation and location predictions. However, the federated learning algorithm of \cite{BIPchen} cannot be used to analyze a large dataset. Meanwhile, the work in \cite{BIPchen} does not analyze the prediction accuracy or memory capacity of the introduced learning algorithm.}}

The key contribution of this work is to develop a novel framework for minimizing BIP within VR applications that operate over wireless networks. To our best knowledge, \emph{this paper is the {\color{black}first to analyze} how a wireless network with distributed learning can minimize BIP for VR users and enhance their virtual world experience}. The key contributions therefore include:

\begin{itemize}
\item For wireless VR users, we mathematically model a new BIP metric that jointly considers VR application type, the delay of VR video and tracking information transmission, VR video quality, and the users' awareness. 
\item To minimize the BIP of wireless VR users, we develop a federated ESN \cite{chen2017machine} learning algorithm that enables BSs to locally train their machine learning algorithms using the data collected from the users' locations and orientations. Then, the BSs can cooperatively build a learning model by sharing their trained models to predict the users' locations and orientations. Based on these predictions, we perform fundamental analysis to find an efficient  user association for each VR user that minimizes the BIP.

\item To analyze the prediction accuracy of the federated ESN learning algorithm, we study the memory capacity of federated ESNs. {\color{black}The memory capacity characterizes the ability of the ESN model to record historical locations and orientations of each VR user.} As the memory capacity increases, the prediction accuracy will improve. Since the BSs determine the user association based on these predictions, better prediction accuracy will lead to a more effective 
user association scheme that will minimize the number of BIP. The analytical results show that the memory capacity of ESNs depends on the number of neurons in each ESN model and the values of matrices that are used to generate the ESN model. 

\item Simulation results demonstrate that our proposed algorithm can achieve significant improvements in the statistics of BIP that occur within a wireless VR network. 
\end{itemize}
 
 The rest of this paper is organized as follows. The problem formulation is presented in Section \uppercase\expandafter{\romannumeral2}. The federated ESN learning algorithm for the predictions is proposed in Section \uppercase\expandafter{\romannumeral3}. In Section \uppercase\expandafter{\romannumeral4}, the memory capacity of various ESN models are analyzed. The user association is found in Section \uppercase\expandafter{\romannumeral5}. In Section \uppercase\expandafter{\romannumeral6}, simulation results are presented and conclusions are drawn in Section \uppercase\expandafter{\romannumeral7}.   

\section{System Model and Problem Formulation}\label{se:system}

\begin{figure}[!t]
  \begin{center}
   \vspace{0cm}
    \includegraphics[width=12cm]{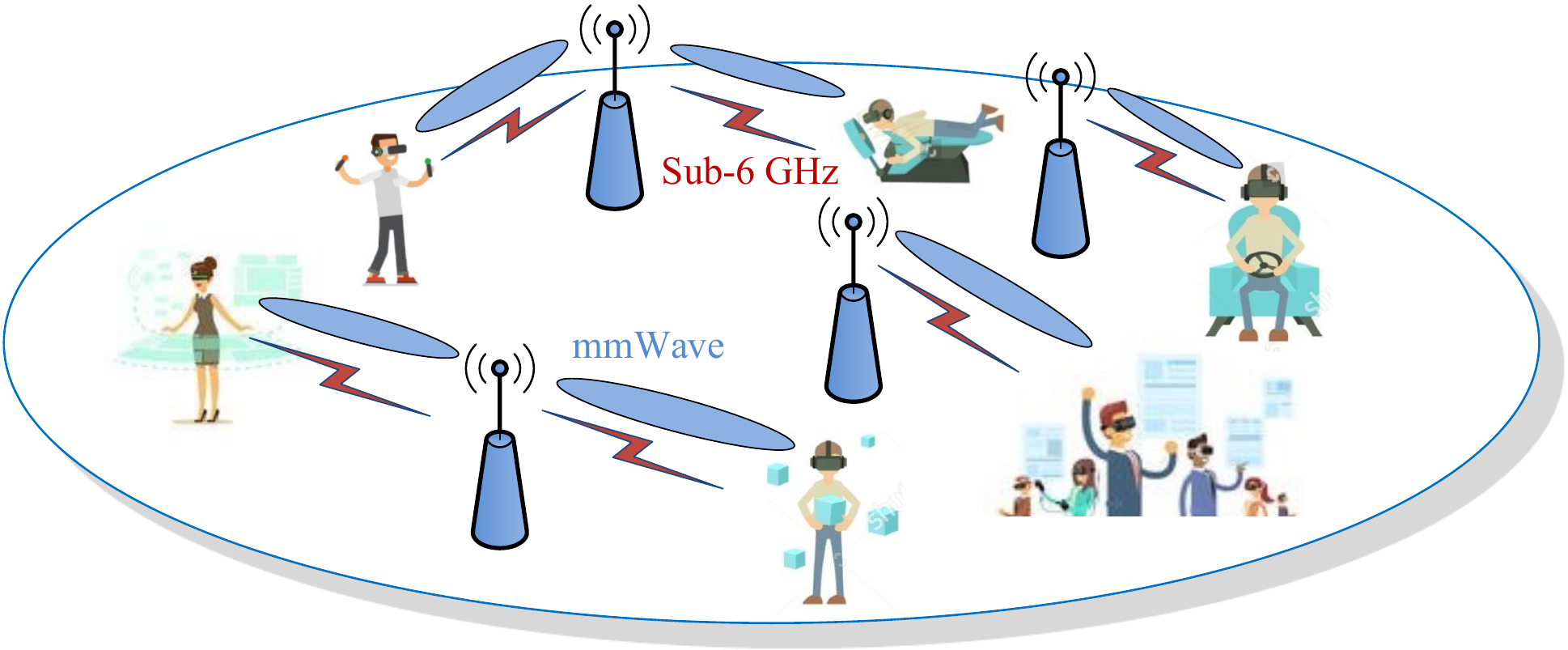}
    \vspace{-0.3cm}
    \caption{\label{architecture} The architecture of a wireless VR network. In this architecture, the Sub-6 GHz uplink is used to transmit tracking information and the mmWave downlink is used to transmit VR videos.}
  \end{center}\vspace{-1cm}
\end{figure}

Consider a cellular network that consists of a set $\mathcal{B}$ of $B$ BSs that service a set $\mathcal{U}$ of $U$ VR users. 
In this model, BSs act as VR controllers that can collect the tracking information related to the users' movements via VR sensors and use the collected data to generate the VR videos for their associated users, {\color{black}as shown in Fig. \ref{architecture}}. In particular, the uplink is used to transmit tracking information such as users' locations and orientations from the VR devices to the BSs, while the downlink is used to transmit VR videos from BSs to VR users. For user association, the VR users can associate with different BSs for uplink and downlink data transmissions.  
{\color{black}Different from prior works such as in \cite{bacstuug2016towards,sun2018communication,park2018urllc,8319985,kasgari2018human,8377419,8094203,8395443} that assume the VR users to be static, we consider a practical scenario in which the locations and orientations of the VR users will impact the VR application performance.} 

\subsection{Transmission Model} 
We consider both uplink and downlink transmission links between BSs and VR users. The VR users can operate at both mmWave and sub-6 GHz frequencies{\color{black}\cite{7929424,7398165,7381699}}. The VR videos are transmitted from BSs to VR users over the 28 GHz band. Meanwhile, the tracking information is transmitted from VR devices to their associated BSs over a sub-6 GHz frequency band. This is due to the fact that sub-6 GHz frequencies with limited bandwidth cannot support the large data rates required for VR video transmissions. However, it can provide reliable communications for sending small data sized users' tracking information. 

\subsubsection{Uplink Transmissions of User Tracking Information}
Let $\left(x_{it},y_{it}\right)$ be the Cartesian coordinates for the location of user $i$ at time $t$ and $S$ be the data size of each user's tracking information, including location and orientation. $S$ depends on the VR system (i.e., HTC Vive  \cite{htc} or Oculus \cite{oculus}). The data rate for transmitting the tracking information from VR user $i$ to BS $j$ is given by:

 \begin{equation} 
 c_{ij}^\textrm{UL}\left(x_{it},y_{it}\right)= {F^\textrm{UL}}{\log _2}\!\left(1+ {\frac{{{P_{u}}g_{ij}d_{ij}^{-\beta}\left(x_{it},y_{it}\right)}}{{\sum\limits_{k \in \mathcal{U}_i} {{P_{u}}g_{kj}d_{kj}^{-\beta}\left(x_{kt},y_{kt}\right)}  \!+\! {\rho  ^2}}}}\! \right)\!,
 \end{equation}
where $F^\textrm{UL}$ is the bandwidth of each subcarrier, $U_j^\textrm{UL}$ is the number of VR users associated with BS $j$ over uplink, $\mathcal{U}_{i}$ is the set of VR users that use the same subcarriers with user $i$, {$P_{u}$ is the transmit power of each VR user (assumed equal for all users)}, $g_{ij}$ is the Rayleigh channel gain, $d_{ij}$ is the distance between VR user $i$ and BS $j$ at time $t$, and ${\rho  ^2}$ is the noise power. 

\subsubsection{Downlink VR Video Transmission}
In the downlink, 
antenna arrays are deployed at BSs to perform directional beamforming over the mmWave frequency band. For simplicity, a sectored antenna model \cite{7934345} is used to approximate the actual array beam patterns. This simplified antenna model consists of four parameters: the half-power beamwidth $\phi$, the boresight direction $\theta$, the antenna gain of the mainlobe $Q$, and the antenna gain of the sidelobe $q$. Let $\varphi_{ij}$ be the phase from BS $j$ to VR user $i$. The antenna gain of the transmission link from BS $j$ to user $i$ is: 
\begin{equation}
{G_{ij}} = \left\{ {\begin{array}{*{20}{c}}
  {Q,~\textrm{if}~\left| {{\varphi _{ij}} - {\theta _j}} \right| \leqslant \frac{{{\phi }}}{2},} \\ 
  {q,~\textrm{if}~\left| {{\varphi _{ij}} - {\theta _j}} \right| > \frac{{{\phi }}}{2}.} 
\end{array}} \right.
\end{equation}

Since the VR device is located in front of the VR user's head, the mmWave link will be blocked, if the user rotates. Let ${\chi}_{it}$ be the orientation of user $i$ at time $t$ and $\vartheta$ be the maximum angle using which BS $j$ can directly transmit VR videos to a user without any human body blockage. $\phi'_{ij}$ denotes the phase from user $i$ to BS $j$. For user $i$, {\color{black}the blockage effect, ${b_{i}}\left( {\chi _{it}}\right)$, caused by its own body} can be given by:
\begin{equation}\label{eq:blockageeffect}
{b_{i}}\left( {\chi _{it}}\right) = \left\{ {\begin{array}{*{20}{c}}
  {1,~\textrm{if}~\left| {{\varphi'_{ij}} - {\chi _{it}}} \right| \leqslant {{{\vartheta }}},} \\ 
  {0,~\textrm{if}~\left| {{\varphi' _{ij}} - {\chi _{it}}} \right| >  {{{\vartheta}}}.} 
\end{array}} \right.
\end{equation}
We assume that each VR user's body constitutes a single blockage area and $n_{ijt}$ represents the number of VR users located between user $i$ and BS $j$ at time $t$. If there are no users located between user $i$ and BS $j$ that block the mmWave link, i.e., ($b_i\left({\chi _{it}} \right)+n_{ij}=0$, then, as shown in Fig. \ref{fig51}), the communication link between user $i$ and BS $j$ is line-of-sight (LoS). If the mmWave link between user $i$ and BS $j$ is blocked by the user $i$'s own body (as shown in Fig. \ref{fig52}, $b_i\left({\chi _{it}} \right)=1$) or blocked by other users located between user $i$ and BS $j$ (as shown in Fig. \ref{fig53}, $+n_{ij}>0$), then the communication link between user $i$ and BS $j$ is said to be non-line-of-sight (NLoS). {\color{black}From (\ref{eq:blockageeffect}), we can see that ${b_{i}}\left( {\chi _{it}}\right)$ and $n_{ij}$ can be directly determined by the users' orientations and locations.}
\begin{figure*}
\centerline{\subfigure[LoS links. ]{\includegraphics[width=4.5cm]{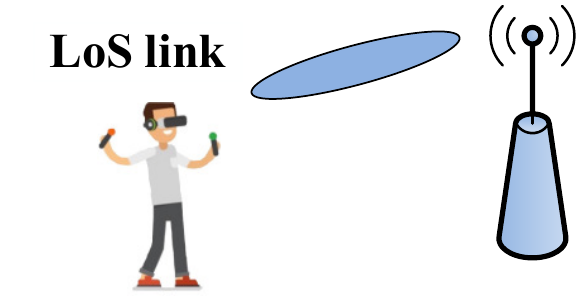}
\label{fig51}}\hspace{0.4 cm}
\subfigure[NLoS links caused by the user's own body.]{\includegraphics[width=4.7cm]{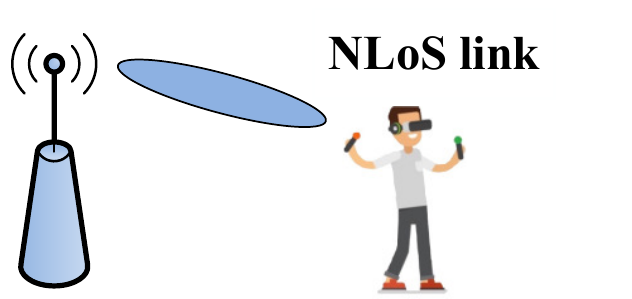}
\label{fig52}}\hspace{0cm}
\subfigure[NLoS links caused by other user's body. ]{\includegraphics[width=5.5cm]{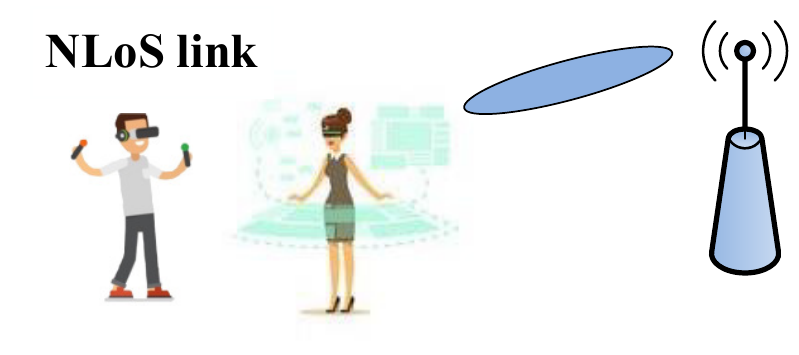}
\label{fig53}}}\vspace{-0.1cm}
\caption{\label{fig5}VR video transmission over LoS/NLoS links}\vspace{-1cm}
\end{figure*}

 Considering path loss and shadowing effects, the path loss for a LoS link and a NLoS link between VR user $i$ and BS $j$ in dB will be given by \cite{7934345}:
\begin{equation}
h_{ij}^\textrm{LoS}\left( {{x_{it}},{y_{it}}} \right) = 20\log \left(  \frac{d^0f_c4\pi}{\nu} \right)
+ 10{\color{black}\varpi_{\textrm{LoS}}}\log \left( {{d_{ij}}\left(  x_{it}, y_{it} \right)} \right) + {\mu _{\sigma_\textrm{LoS}} },
\end{equation}
\begin{equation}
h_{ij}^\textrm{NLoS}\left( {{x_{it}},{y_{it}}} \right) = 20\log \left(  \frac{d^0f_c4\pi}{\nu} \right)
+ 10{\color{black}\varpi_{\textrm{NLoS}}}\log \left( {{d_{ij}}\left(  x_{it}, y_{it} \right)} \right) + {\mu _{\sigma_\textrm{NLoS}} },
\end{equation}
where $20\log \left(  \frac{d_0f_c4\pi}{\nu} \right)$ is the free space path loss. Here, $d^0$ represents the reference distance, $f_c$ is the carrier frequency and $\nu $ is the light speed. {\color{black}$\varpi_\textrm{LoS}$} and {\color{black}$\varpi_\textrm{NLoS}$} represent the path loss exponents for the LoS and NLoS links, respectively. ${\mu_{\sigma_\textrm{LoS}} }$ and ${\mu_{\sigma_\textrm{NLoS}} }$ represent Gaussian random variables with zero mean, respectively. $\sigma_\textrm{LoS}$ and $\sigma_\textrm{NLoS}$ represent the standard deviations for LoS and NLoS links in dB, respectively. The downlink data rate of VR video transmission from BS $j$ to user $i$ is given by:
  \begin{equation}
   c_{ij}^\textrm{DL}\left(x_{it},y_{it}, b_i\left(\chi_{it}\right), n_{ij}\right)=\left\{ {\begin{array}{*{20}{c}}
  {{F^\textrm{DL}}{{\log }_2}\left( {1 + \frac{{{P_\textrm{B}G_{ij}}}}{{{{10}^{{{h_{ij}^\textrm{LoS}} \mathord{\left/
 {\vphantom {{h_{ij}^{LoS}} {10}}} \right.
 \kern-\nulldelimiterspace} {10}}}}{\rho  ^2}}}} \right),\textrm{if}~{b_i}\left( {{\chi _{it}}} \right) + {n_{ij}} = 0,} \\ 
  {{F^\textrm{DL}}{{\log }_2}\left( {1 + \frac{{{P_\textrm{B}G_{ij}}}}{{{{10}^{{{h_{ij}^\textrm{NLoS}} \mathord{\left/
 {\vphantom {{h_{ij}^{NLoS}} {10}}} \right.
 \kern-\nulldelimiterspace} {10}}}}{\rho  ^2}}}} \right),\textrm{if}~{b_i}\left( {{\chi _{it}}} \right) + {n_{ij}} > 0,} 
\end{array}} \right.
  \end{equation} 
  where $F^\textrm{DL}$ is the bandwidth allocated to each user and $P_\textrm{B}$ is the transmit power of each BS $j$ which is assumed to be equal for all BSs. {Since the downlink uses mmWave links, we assume that, due to directivity, interference in (6) can be neglected, as done in \cite{7491363}.} 
  
     
\subsection{Break in Presence Model}
In a VR application, the notion of a BIP represents an event that leads the VR users to realize that they are in a fictitious, virtual environment, thus ruining their immersive experience. In other words, a BIP event transitions a user from the immersive virtual world to the real world \cite{jerald2015vr}. For wired VR, BIP can be caused by various factors such as hitting the walls/ceiling, loss of tracking with the device, tripping on wire cords, or talking to another person from the real world \cite{jerald2015vr}. For wireless VR, BIP can be also caused by the delay of VR video and tracking information transmission, the quality of the VR videos received by the VR users, and inaccurate tracking information received by BSs. 

To model such BIP, we jointly consider the delay of VR video and tracking information transmission and the quality of the VR videos. We first define a vector $\boldsymbol{l}_{i,t}\left(c_{ij}^\textrm{DL}\left(x_{it},y_{it}, b_i\left(\chi_{it}\right), n_{ij}\right)\right)=\left[l_{i1,t},\ldots,l_{iN_L,t} \right]$ that represents  a VR video that user $i$ received at time $t$ with $l_{ik,t}\in \left\{0,1\right\}$. $l_{ik,t}=0$ indicates that pixel $k$ is not successfully received by user $i$, and $l_{ik,t}=1$, otherwise. We also define a vector $\boldsymbol{m}_{i,t}\left( G_A\right)=\left[m_{i1,t},\ldots, m_{iN_L,t} \right]^{\rm T}$ that represents the weight of the importance of each pixel constructing a VR video, where $m_{ik,t}\in \left[0,1\right]$ and $G_A$ represents a VR application such as an immersive VR game or a VR video. {\color{black} $m_{ik,t}=1$ indicates that pixel $k$ is one of the most important elements for the generation of $G_A$. The value of $m_{ik,t}$ depends on the compression used for the VR video. In each VR application $G_A$, a number of pixels can be compressed at the BS and recovered by the user and, hence, these pixels are not important. However, the pixels that cannot be compressed by the BS are important and must be transmitted to the VR users}. Therefore, each pixel will have different importance and $m_{ik,t}\in \left[0,1\right]$. Then, the BIP of VR user $i$ caused by the wireless transmission will be given by:  
{\color{black}
\begin{equation}\label{eq:BIP}
\begin{split}
\omega_{it}\left(x_{it},y_{it}, \chi_{it}, \boldsymbol{a}_{i,t}^\textrm{UL}, \boldsymbol{a}_{i,t}^\textrm{DL}\right)=&\\&\!\!\!\!\!\!\!\!\!\!\!\!\!\!\!\!\!\!\!\!\!\!\!\!\!\!\!\!\!\!\!\!\!\!\!\!\!\!\!\!\!\!\!\!\!\!\!\!\!\!\!\!\!\!\!\!\!\!\!\!\!\!\mathds{1}_{\left\{ \frac{A}{{a}_{ij,t}^\textrm{UL}c_{ij}^\textrm{UL}\left(x_{it},y_{it}\right)}+\frac{D\left( \boldsymbol{l}_{i,t}\left({a}_{ik,t}^\textrm{DL}c_{ik}^\textrm{DL}\left(x_{it},y_{it}, b_i\left(\chi_{it}\right), n_{ik}\right)\right)\right)}{{a}_{ik,t}^\textrm{DL}c_{ik}^\textrm{DL}\left(x_{it},y_{it}, b_i\left(\chi_{it}\right), n_{ik}\right)}  > \gamma_\textrm{D}~\vee~\boldsymbol{l}_{i,t}\left({a}_{ik,t}^\textrm{DL}c_{ik}^\textrm{DL}\left(x_{it},y_{it}, b_i\left(\chi_{it}\right), n_{ik}\right)\right)\boldsymbol{m}_{i,t}\left( G_A\right)  < \gamma_\textrm{Q} \right\}}.
\end{split}
\end{equation}}
where {\color{black}$\mathds{1}_{\left\{x\right\}}=1$ if $x$ is true, and otherwise, we have $\mathds{1}_{\left\{x\right\}}=0$.} $\mathds{1}_{\left\{x\right\}}  \vee  \mathds{1}_{\left\{y\right\}}=1$ as $y$ or $x$ is true, $\mathds{1}_{\left\{x\right\}} \vee  \mathds{1}_{\left\{y\right\}}=0$, otherwise. $\boldsymbol{a}_{i,t}^\textrm{UL}=\left[{a}_{i1,t}^\textrm{UL},\ldots, {a}_{iB,t}^\textrm{UL}\right]$ is a vector that represents user $i$'s uplink association with ${a}_{ik,t}^\textrm{UL} \in \left\{0,1\right\}$ and $\sum\limits_{k \in \mathcal{B}} {a}_{ik,t}^\textrm{UL}=1$. Similarly, $\boldsymbol{a}_{i,t}^\textrm{DL}=\left[{a}_{i1,t}^\textrm{DL},\ldots, {a}_{iB,t}^\textrm{DL}\right]$ is a vector that represents user $i$'s downlink association with ${a}_{ik,t}^\textrm{DL} \in \left\{0,1\right\}$ and $\sum\limits_{k \in \mathcal{B}} {a}_{ik,t}^\textrm{DL}=1$. $\gamma_D$ and $\gamma_Q$ represent the target delay and video quality requirements, respectively. In (\ref{eq:BIP}), {\color{black}$A$ represents the data size of the tracking information,} $\frac{A}{c_{ij}^\textrm{UL}\left(x_{it},y_{it}\right)}$ represents the time used for tracking information transmission from user $i$ to BS $j$, {\color{black}$D\left( \boldsymbol{l}_{i,t}\left(c_{ik}^\textrm{DL}\left(x_{it},y_{it}, b_i\left(\chi_{it}\right), n_{ik}\right)\right)\right)$ represents the data size of a VR video, and} $\frac{D\left( \boldsymbol{l}_{i,t}\left(c_{ik}^\textrm{DL}\left(x_{it},y_{it}, b_i\left(\chi_{it}\right), n_{ik}\right)\right)\right)}{c_{ik}^\textrm{DL}\left(x_{it},y_{it}, b_i\left(\chi_{it}\right), n_{ik}\right)} $ represents the transmission latency for sending the tracking information from BS $k$ to user $i$. For simplicity, hereinafter, {\color{black}$\omega_{it}$ is referred as $\omega_{it}\left(x_{it},y_{it}, \chi_{it}, \boldsymbol{a}_{i,t}^\textrm{UL}, \boldsymbol{a}_{i,t}^\textrm{DL}\right)$.}  
(\ref{eq:BIP}) shows that if the delay of VR video and tracking information transmission exceeds the target delay threshold allowed by VR systems or the quality of the VR video cannot meet the video requirement, users will experience a BIP ($\omega_{it}$=1).   
From (\ref{eq:BIP}), we can also see that, the BIP of user $i$ caused by wireless transmission depends on user $i$'s location, orientation, VR applications, and user association. {\color{black}(\ref{eq:BIP}) captures the BIP caused by wireless networking factors such as transmission delay and video quality. Next, we define a BIP model that jointly considers wireless transmission, the VR application type, and the users' awareness.} The BIP of user $i$ can be given by \cite{chung2010analysis}:  
\begin{equation}\label{eq:toBIP}
{\color{black}P_{i}\left(x_{it},y_{it}, G_A, \chi_{it}, \boldsymbol{a}_{i,t}^\textrm{UL}, \boldsymbol{a}_{i,t}^\textrm{DL}\right)= \frac{1}{T}\sum\limits_{t= 1}^{{T}} \left(G_A+\omega_{it}+G_A\omega_{it}+{\epsilon _{i}}+{\epsilon _{G_A|i}}+{\epsilon_B}\right),}
\end{equation}
where ${\epsilon _i}$ is user $i$'s awareness, $\epsilon _{G_A|i}$ is the joint effect caused by user $i$'s awareness and VR application $G_A$, and $\epsilon_B$ is a random effect. ${\epsilon _i}$, $\epsilon _{G_A|i}$, and $\epsilon_B$ follow a Gaussian distribution \cite{chung2010analysis} with zero mean and variances $\sigma _i^2$, $\sigma _{G_A|i}^2$, and $\sigma_B^2$, respectively. In (\ref{eq:toBIP}), the value of $P_{i}\left(x_{it},y_{it}, G_A, \chi_{it}, \boldsymbol{a}_{i,t}^\textrm{UL}, \boldsymbol{a}_{i,t}^\textrm{DL}\right)$ quantifies the average number of BIP that user $i$ can identify during a period.    From (\ref{eq:toBIP}), we can see that, as the VR application for user $i$ changes, the BIP value will change. For example, a given user watching VR videos will experience fewer BIP compared to a user engaged in an immersive first-person shooting game. This is due to the fact that in an immersive game environment, users are fully engaged with the virtual environment, as opposed to some VR applications that require the user to only watch VR videos. In (\ref{eq:toBIP}), we can also see that the BIP depend on the users' awareness. This means that different users will have different actions and perceptions when they interact with the virtual environment and, hence, different VR users may experience different levels of BIP. 

\subsection{Problem Formulation}
From (\ref{eq:toBIP}), we can see that the BIP of each user depends on this user's location, orientation, and selected BSs. By using an effective learning algorithm to predict the users' locations and orientations, the BSs can proactively determine the users' association to improve the downlink and uplink data rates and minimize BIP for each VR user.
The BIP minimization problem is:
 \addtocounter{equation}{0}
\begin{equation}\label{eq:max}
{\color{black}\mathop {\min }\limits_{\boldsymbol{a}_{i,t}^\textrm{UL}, \boldsymbol{a}_{i,t}^\textrm{DL}}\sum\limits_{i \in \mathcal{U} }P_{i}\left(\hat x_{it},\hat y_{it}, G_A, \hat \chi_{it}, \boldsymbol{a}_{i,t}^\textrm{UL}, \boldsymbol{a}_{i,t}^\textrm{DL}\right)}
\end{equation}
\begin{align}\label{c1}
\setlength{\abovedisplayskip}{-20 pt}
\setlength{\belowdisplayskip}{-20 pt}
&\rm{s.\;t.}\scalebox{1}{$\;\;\;\;\;{{U}_j } \leqslant  {V}, \;\;\;\;\; \forall j \in \mathcal{B},$}\tag{\theequation a}\\
&\scalebox{1}{$\;\;\;\;\;\;\;\;\;\; {a}_{ij,t}^\textrm{UL},{a}_{ij,t}^\textrm{DL}  \in \left\{0,1\right\},\;\;\;\forall i \in \mathcal{U},\forall j \in \mathcal{B}, $} \tag{\theequation b}\\
&\scalebox{1}{$\;\;\;\;\;\;\;\;\;\; \sum\limits_{j \in \mathcal{B}} {a}_{ij,t}^\textrm{UL}=1,\sum\limits_{j \in \mathcal{B}} {a}_{ij,t}^\textrm{DL}=1,\;\;\;\;\;\forall i \in \mathcal{U},$} \tag{\theequation c}
\end{align}
where $\hat x_{it}$, $\hat y_{it}$, and $\hat \chi_{it}$ are the predicted locations and orientation of user $i$ at time $t$, {\color{black}which depend on the actual historical locations and orientation of user $i$.}
${U}_j $ is the number of VR users associated with BS $j$ over downlink and $V$ is the maximum number of users that can be associated with each   BS. (\ref{eq:max}b) and (\ref{eq:max}c) show that each user can associate with only one uplink BS and one downlink BS. {From (\ref{eq:max}), we can see that the BIP of each user will depend on the user association as well as the users' locations and orientations. Meanwhile, the user association depends on the locations and orientations of the VR users. {\color{black}If the BSs perform the user association without knowledge of the locations and orientations of the users, the body blockage between the user-BS transmission links can potentially be significant, thus increasing the BIP of each user.} Therefore, the BSs must use historical information related to the users' locations and orientations to determine the user association. As the users' locations and orientations will continuously change as time elapses, BSs must proactively determine the user association to reduce the BIP of VR users. 
 In consequence, it is necessary to introduce a machine learning algorithm to predict the users' locations and orientations in order to determine the user association and minimize BIP of VR users.} {In the model defined in Section \ref{se:system}, the user association changes as the users' location and orientation vary with time. Consequently, each BS that connects to a given VR user can only collect partial information about this user's locations and orientation. However, a BS cannot rely on partial information to predict each user's location and orientation. Moreover, since a given VR user will change its association, the data pertaining to this VR user's movement will be located at multiple BSs.  Hence, traditional centralized learning algorithms that are implemented by a given BS
 cannot predict the entire VR user's locations and orientations without knowing the user's data collected by other BSs. 
To overcome the challenges mentioned previously, we introduce a distributed federated learning framework that can predict the location and orientation of each VR user as the training data related to each user's locations and orientations is located at multiple BSs.} 

\section{Federated Echo State Learning for Predictions of the Users' location and Orientation}\label{se:esn}

\emph{Federated learning} is a decentralized learning algorithm \cite{amiri2018computation} that can operate by using training datasets that are distributed across multiple devices (e.g., BSs). For our system, one key advantage of federated learning is that it can allow multiple BSs to locally train their local learning model using their collected data and cooperatively build a learning model by sharing their locally trained models.
 Compared to existing federated learning algorithms \cite{smith2017federated} that use matrices to record the users' behavior and cannot analyze the correlation of the users' behavior data, we propose an ESN-based federated learning algorithm that can use an ESN to efficiently analyze the data related to the users' location and orientation. {\color{black}The proposed algorithm enables the BSs to collaboratively generate a global ESN model to predict the whole set of locations and orientations for each user without transmitting the collected data to other BSs. 
 However, if the BSs use the centralized learning algorithms for the orientation and location predictions, the BSs must use the data collected from all BSs to train the algorithm. ESNs have two unique advantages: simple training process and the ability to analyze time-dependent data \cite{chen2017machine}. Since the data that is related to the orientation and locations of the users is time-dependent and the users' orientation and locations will change frequently, we must use ESNs that can efficiently analyze time-dependent data and converge quickly to obtain the prediction results on time and determine the user association.} 
Next, we first introduce the components of the federated ESN learning model. Then, we explain the entire procedure of using our federated ESN learning algorithm to predict the users' locations and orientation.

\subsection{Components of Federated ESN Learning Algorithm}\label{se:CESNFL}
A federated ESN learning algorithm consists of five components: {\color{black}a) agents, b) input, c) output, and d) local ESN model}, which are specified as follows:
 
 \begin{itemize}
\item \emph{Agent}: In our system, we need to define an individual federated ESN learning algorithm to predict the location and orientation of each VR user. Meanwhile, each user's individual federated ESN learning algorithm must be implemented by all BSs that have been associated with this user. {\color{black}Each BS $j$ must implement $U$ learning algorithms to predict the locations and orientations of all users.}

\item \emph{Input:} {\color{black}The input of the federated ESN learning algorithm that is implemented by BS $j$ for the predictions of each VR user $i$ is defined by a vector $\boldsymbol{\upsilon}_{ij}=\left[ {\boldsymbol{\upsilon}_{ij,1}, \cdots , \boldsymbol{\upsilon}_{ij,T}} \right]^{\mathrm{T}}$ that represents the information related to user $i$'s location and orientation where $\boldsymbol{\upsilon}_{ij,t}=\left[{\xi }_{ij1,t},\ldots, {\xi}_{ijN_x,t}\right]$ represents user $i$'s information related to location and orientation at time $t$. This information includes user $i$'s locations, orientations, and VR applications.} $N_x$ is the number of properties that constitute a vector $\boldsymbol{\upsilon}_{ij,t}$. The input of the proposed algorithm will be combined with the ESN model to predict users' orientation and locations. BSs will use these predictions to determine user associations.


\item \emph{Output:} For each user $i$, the output of the federated ESN learning algorithm at BS $j$ is a vector $\boldsymbol{y}_{ij,t}= \left[ { {\hat{\boldsymbol{y}}_{ijt+1}}, \ldots , { \hat{\boldsymbol{y}}_{ijt+Y}}  } \right]$ of user $i$'s locations and orientations where ${\hat{\boldsymbol{y}}_{ijt+k}}=\left[ \hat x_{it+k}, \hat y_{it+k}, \hat \chi_{it+k}\right]$ with $ \hat x_{it+k}$ and $\hat y_{it+k}$ being the predicted location coordinates of user $i$ at time $t+k$ and $\hat \chi_{it+k}$ being the estimated orientation of user $i$ at $t+k$. $Y$ is the number of future time slots that a federated ESN learning algorithm can predict. The predictions of the locations and orientations can be used to determine the user's association. 
\item \emph{Local ESN model:} For each BS $j$, a local ESN model is used to build the relationship between the input of all BSs and the predictions of the users' location and orientation, as shown in Fig. \ref{f:LSMmodel}. The local ESN model consists of the input weight matrix $\boldsymbol{W}_{j}^\textrm{in} \in {\mathbb{R}^{N_W \times T}}$, recurrent matrix $\boldsymbol{W}_{j} \in {\mathbb{R}^{N_W \times N_W}}$, and the output weight matrix $\boldsymbol{W}_{j}^\textrm{out} \in {\mathbb{R}^{Y \times \left(N_W+T\right)}}$. The values of $\boldsymbol{W}_{j}^\textrm{in}$ and $\boldsymbol{W}_{j}$ are generated randomly. However, the output weight matrix $\boldsymbol{W}_{j}^\textrm{out}$ need to be trained according to the inputs of all BSs.  

\begin{figure}
\centering
\vspace{0cm}
\subfigure[a series ESN model]{
\label{fig7a} 
\includegraphics[width=9cm]{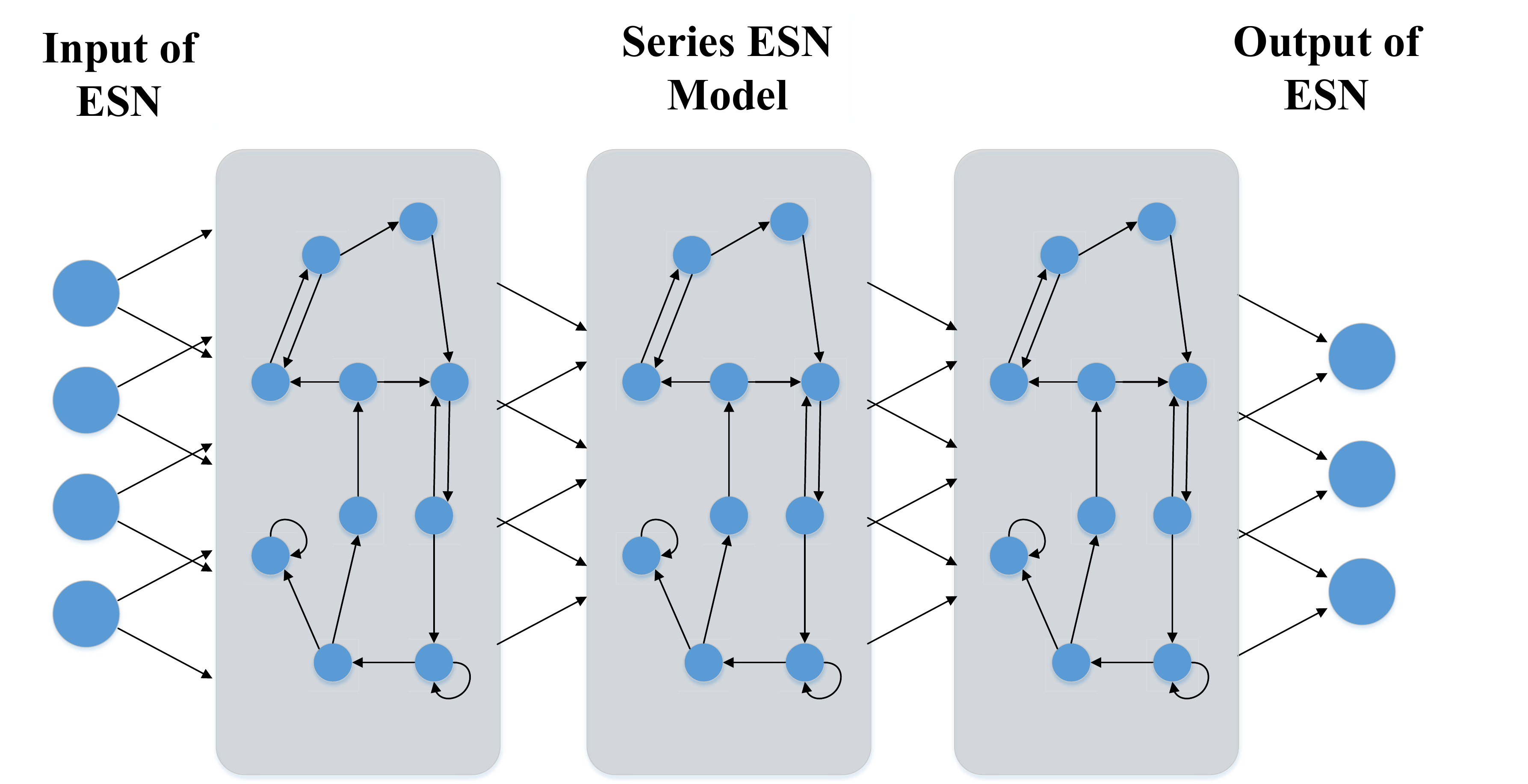}}
\subfigure[a parallel ESN model]{ 
\label{fig7b} 
\includegraphics[width=5cm]{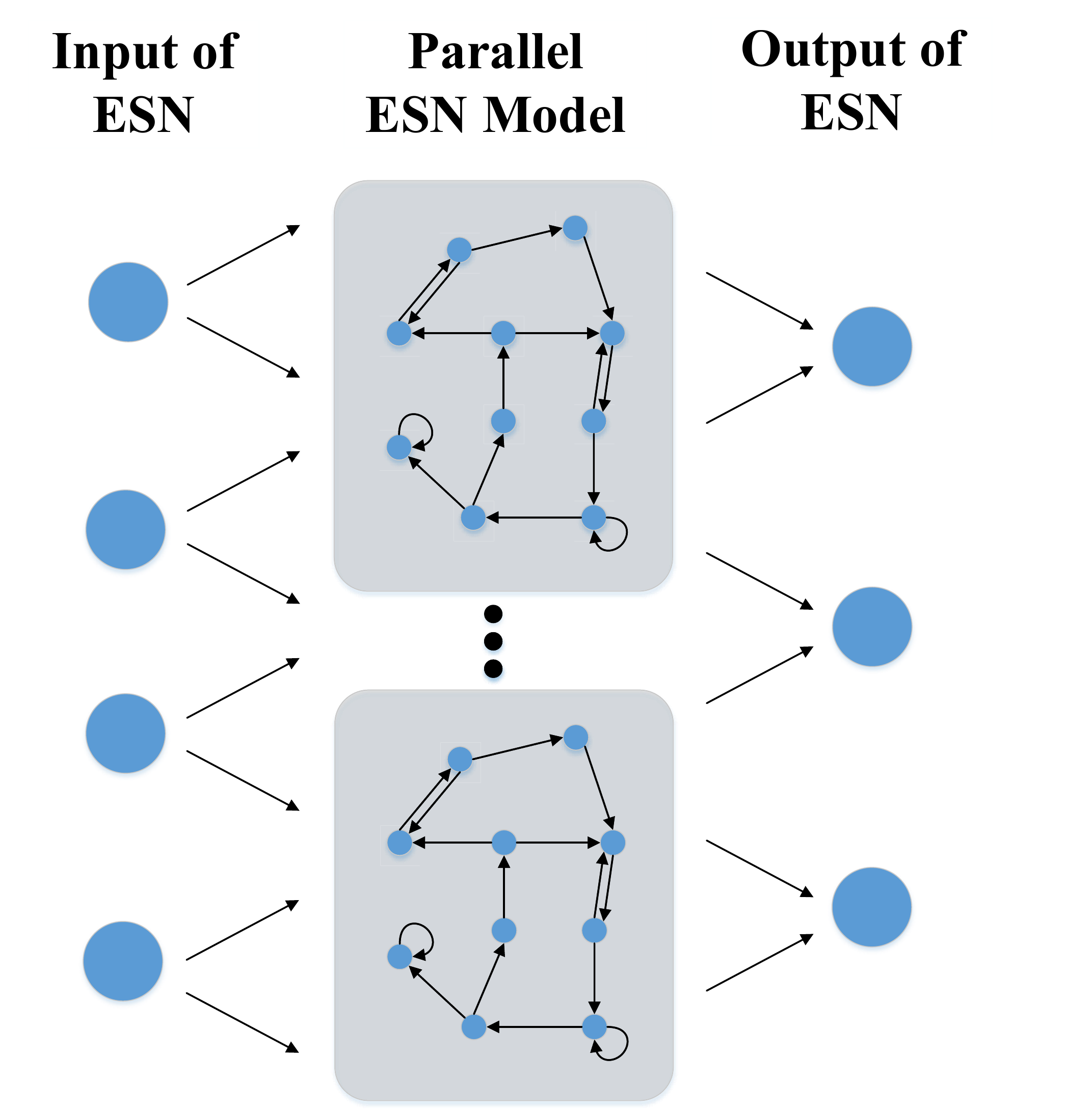}}
  \vspace{-0.2cm}
 \caption{\label{fig7} {Architectures of deep ESN models}.}
 \vspace{-0.6cm}
\end{figure}

We introduce three ESN models: \emph{single ESN model,  series ESN model, and  parallel ESN model}. In the single ESN model, an ESN is directly connected to the input and output. Moreover, as shown in Fig. \ref{fig7}, series and parallel ESN models connect single ESN models in series and parallel, respectively. {\color{black}Each ESN model has its own advantage for our problem. In particular, a single ESN model can converge faster than a series ESN model and a parallel ESN model. A parallel ESN model has a larger memory capacity than a series ESN model. A series ESN model can decrease the prediction errors in the training process.} 

\end{itemize}

 \begin{figure}[!t]
  \begin{center}
   \vspace{0cm}
    \includegraphics[width=13cm]{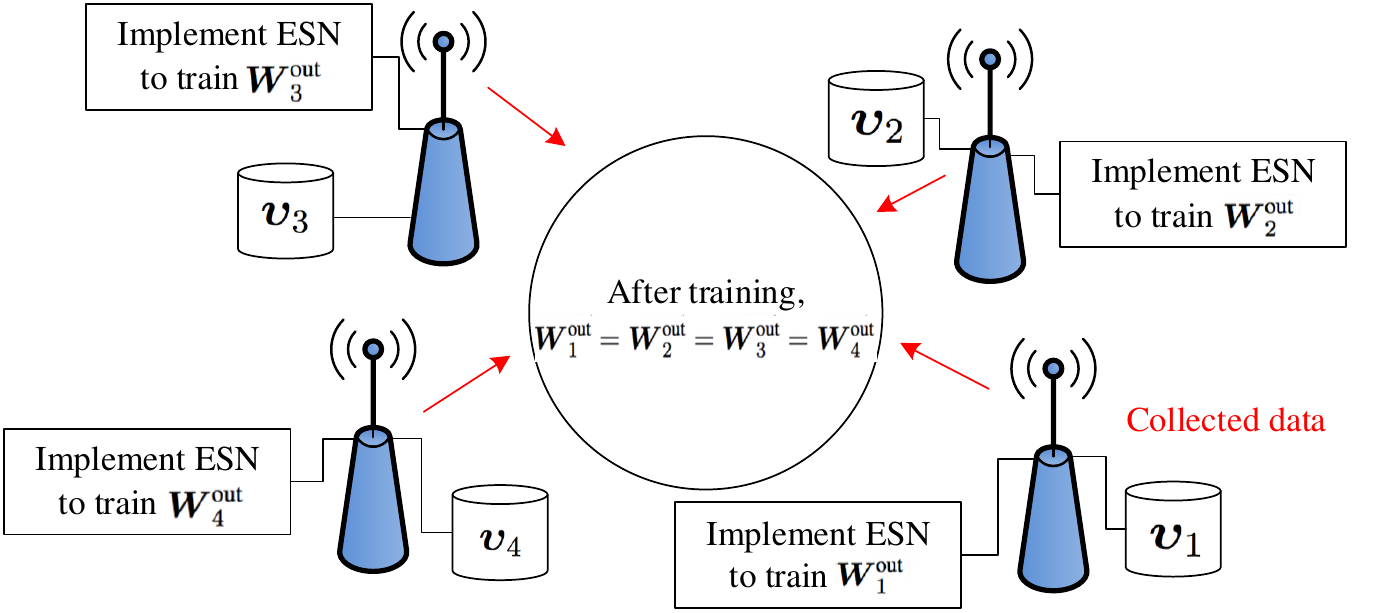}
    \vspace{-0.3cm}
    \caption{\label{f:LSMmodel} The implementation of the ESN based federated learning. {\color{black}Here, the data is located at the BSs and
the learning model $\boldsymbol{W}_j^{\textrm{out}}$ that is trained by each BS's collected data is the local model.} }
  \end{center}\vspace{-1.2cm}
\end{figure}

\subsection{ESN Based Federated Learning Algorithm for Users' Location and Orientation Predictions}
Next, we explain the entire procedure of training the proposed ESN-based federated learning algorithm. Our purpose of training ESN is to find an optimal output weight matrix in order to accurately predict the users' locations and orientations, as shown in Fig. \ref{f:LSMmodel}. 

To introduce the training process, we first explain the state of the neurons in ESN. The neuron states of the proposed algorithm implemented by BS $j$ for the predictions of user $i$ are:
 \begin{equation}\label{eq:state}
{\boldsymbol{\mu}_{j,t}} = {\boldsymbol{W}_j{\boldsymbol{\mu}_{j,t-1}} + \boldsymbol{W}_j^\textrm{in}{\boldsymbol{\upsilon}_{ij,t}}}.
\end{equation} 
  Based on the states of neurons and the inputs, the ESN can estimate the output, which is:
 \begin{equation}\label{eq:update}
\hat{\boldsymbol{y}}_{ij,t} = {\boldsymbol{W}_{j,t}^\textrm{out}} \left[ {\begin{array}{*{20}{c}}
    {\boldsymbol{\upsilon}_{ij,t}} \\ 
 {\boldsymbol{\mu}_{j,t}} 
\end{array}} \right]. 
\end{equation}
From (\ref{eq:update}), we can see that, in order to enable an ESN to predict the users' locations and orientations, we only need to adjust the value of the output weight matrix. However, each BS can collect only partial data for each user and, hence, we need to use a distributed learning algorithm to train the ESNs. To introduce the distributed learning algorithm, we first define two matrices which are given by: 
\begin{equation}
\boldsymbol{H}_j=\left[ {\begin{array}{*{20}{c}}
  {\begin{array}{*{20}{c}}
  {{\boldsymbol{\upsilon}_{ij,1}}}& {\boldsymbol{\mu}_{j,1}} 
\end{array}} \\ 
   \vdots  \\ 
  {\begin{array}{*{20}{c}}
  {{\boldsymbol{\upsilon}_{ij,T}}}&{\boldsymbol{\mu}_{j,T}} 
\end{array}} 
\end{array}} \right]~\textrm{and}~
\boldsymbol{E}_j=\left[ \boldsymbol{e}_{ij,1}, \ldots, \boldsymbol{e}_{ij,T}\right],
\end{equation}
 where $ \boldsymbol{e}_{ij,t}$ is the desired locations and orientations of each VR user, given the ESN input $  {{\boldsymbol{\upsilon}_{ij,t}}}$. Then, the training purpose can be given as follows:
\begin{equation}\label{eq:max2}
\mathop {\min }\limits_{\boldsymbol{W}^\textrm{out}}\frac{1}{2}\left( {\sum\limits_{j = 1}^B {\left\| {{\boldsymbol{W}^\textrm{out}}\boldsymbol{H}_j^{\rm T} - {\boldsymbol{E}_j}} \right\|^2} } \right)+\frac{\lambda}{2}{\left\| {{\boldsymbol{W}^\textrm{out}}} \right\|}.
\end{equation}
{\color{black}(\ref{eq:max2}) is used to find the optimal global output weight matrix ${\boldsymbol{W}^\textrm{out}}$} according to which the BSs can predict the entire users' locations and orientations without the knowledge of the users' data collected by other BSs. From (\ref{eq:max2}), we can see that, {\color{black}each BS $j$ needs to adjust its output weight matrix $\boldsymbol{W}_j^\textrm{out}$ and find the optimal output weight matrix $\boldsymbol{W}^\textrm{out}$. After the
learning step, we have $\boldsymbol{W}_j^\textrm{out}=\boldsymbol{W}^\textrm{out}$, which means that when the learning algorithm converges, the local model of each BS will converge to the global model.}. {\color{black}A standard update policy of $\boldsymbol{W}_j^\textrm{out}$ for the augmented Lagrangian problem in (\ref{eq:max2}) is given by \cite{scardapane2016decentralized}: }
\begin{equation}\label{eq:Wjout}
\boldsymbol{W}_{j,t+1}^\textrm{out}=\varsigma ^{-1} \left[\boldsymbol{I}-\boldsymbol{H}_j^{\rm T}\left( \varsigma  \boldsymbol{I}+ \boldsymbol{H}_j \boldsymbol{H}_j^{\rm T}  \right)\boldsymbol{H}_j^{\rm T} \right]\left(\boldsymbol{H}_j^{\rm T} \boldsymbol{E}_j-\boldsymbol{n}_{j,t}+\varsigma \boldsymbol{W}_{t}^\textrm{out} \right),
\end{equation}
where $\varsigma$ is the learning rate and $\boldsymbol{W}_{t}^\textrm{out} $ is the optimal output weight matrix that the ESN model of each BS needs to find. From (\ref{eq:Wjout}), we can see that $\boldsymbol{W}_{j,t+1}^\textrm{out}$ is the output weight matrix that is generated at BS $j$. $\boldsymbol{W}_{j,t+1}^\textrm{out}$ can only be used to predict partial locations and orientations given the users' data collected by BS $j$. $\boldsymbol{W}_{j,t+1}^\textrm{out}$ is different from the output weight matrices of other BSs.
The optimal output weight matrix is given by:
\begin{equation}\label{eq:Wt}
\boldsymbol{W}_{t+1}^\textrm{out}=\frac{B \varsigma \hat{\boldsymbol{W}}_{t+1}^\textrm{out}+B\hat{\boldsymbol{n}}_{t}}{\lambda+\varsigma B},
\end{equation}
where  $\hat{\boldsymbol{W}}_{t+1}^\textrm{out}$ and $\hat{\boldsymbol{n}}_{t+1}^\textrm{out}$ can be calculated as follows: 
\begin{equation}\label{eq:Wout}
\hat{\boldsymbol{W}}_{t+1}^\textrm{out}=\frac{1}{B}\sum\limits_{j = 1}^B {\boldsymbol{W}_{j,t + 1}^\textrm{out}},~\hat{\boldsymbol{n}}_{t}=\frac{1}{B}\sum\limits_{j = 1}^B {\boldsymbol{n}_{j,t }}.
\end{equation}
{\color{black}From (\ref{eq:Wjout}) to (\ref{eq:Wout}), we can see that the global output weight matrix $\boldsymbol{W}^\textrm{out}$ is based on (\ref{eq:Wt}) and (\ref{eq:Wout}) while the local output weight matrix $\boldsymbol{W}_{j}^\textrm{out}$ is based on (\ref{eq:Wjout}). } In (\ref{eq:Wjout}), $\boldsymbol{n}_{j,t}$ is the deviation between the output weight matrix $\boldsymbol{W}_{j,t+1}^\textrm{out}$ of each BS $j$ and the optimal output weight matrix $\boldsymbol{W}_{t+1}^\textrm{out} $ that the ESN model of each BS needs to converge, which is given by:
\begin{equation}\label{eq:nj}
\boldsymbol{n}_{j,t+1}=\boldsymbol{n}_{j,t}+\gamma\left(\boldsymbol{W}_{j,t+1}^\textrm{out}-\boldsymbol{W}_{t+1}^\textrm{out} \right).
\end{equation}
$\boldsymbol{W}_{t+1}^\textrm{out}$ is the global optimal output weight matrix that can be used to predict the entire locations and orientations of a given user. This means that using $\boldsymbol{W}_{t+1}^\textrm{out}$, each BS can predict the entire user's locations and orientations as the BS only collects partial data related to the user's locations and orientations. As time elapses, $\boldsymbol{W}_{j,t+1}^\textrm{out}$ will finally converge to $\boldsymbol{W}_{t+1}^\textrm{out}$. In consequence, all of BSs can predict the entire locations and orientations of each user. {To measure the convergence, we define two vectors which can be given by $\boldsymbol{r}_{j,t}=\boldsymbol{W}_{j,t}^\textrm{out}-\boldsymbol{W}_{t}^\textrm{out}$ and $\boldsymbol{s}_{j,t}=\boldsymbol{W}_{t}^\textrm{out}-\boldsymbol{W}_{t-1}^\textrm{out}$. As $\left\|\boldsymbol{r}_{j,t+1}\right\| \leqslant \gamma_A$ or $\left\|\boldsymbol{s}_{j,t}\right\|\leqslant \gamma_A$, the proposed algorithm converges. $\gamma_A$ is determined by the BSs. {\color{black} Since the minimization function in (\ref{eq:max2}) is a convex function, the BSs are guaranteed to find an optimal output weight matrix that satisfy $\left\|\boldsymbol{r}_{j,t+1}\right\| \leqslant \gamma_A$ or $\left\|\boldsymbol{s}_{j,t}\right\|\leqslant \gamma_A$.} As $\gamma_A$ increases, the accuracy of the predictions and the number of iterations decrease. Therefore, BSs need to jointly account for the time used for training ESN and the prediction accuracy to determine the value of $\gamma_A$. In fact, the ESN } As the learning algorithm converges, each BS can use its own ESN to predict the entire location and orientation of each VR user. According to these predictions, BSs can determine the user association to minimize the BIP of VR users. Algorithm \ref{algorithm} summarizes the entire process of using ESN based federated learning algorithm for the predictions of the users' locations and orientations. From Algorithm \ref{algorithm}, we can see that $\boldsymbol{W}_{j}^\textrm{in}$ and $\boldsymbol{W}_{j}^\textrm{out}$ are local parameters which means that each BS $j$ will generate its own $\boldsymbol{W}_{j}^\textrm{in}$ and $\boldsymbol{W}_{j}^\textrm{out}$. However, $\boldsymbol{W}_{j}$ is a global parameter which means that all of the BSs will have the same $\boldsymbol{W}_{j}$.   

\begin{algorithm}[t]\footnotesize
\caption{ Federated ESN learning algorithm for location and orientation predictions}   
\label{algorithm}   
\begin{algorithmic} [1] 
\REQUIRE Training data set (local), $\boldsymbol{\upsilon}_{ij}$.\\ 
\vspace{1pt}  
\ENSURE Each BS $j$ generates the ESN model for each user including $\boldsymbol{W}_{j}^\textrm{in}$ (local), $\boldsymbol{W}_{j}$ (global), and $\boldsymbol{W}_{j}^\textrm{out}$ (local). \\ 
\vspace{1pt}  
\STATE Obtain the matrices $\boldsymbol{H}_j$ and $\boldsymbol{E}_j$ based on (\ref{eq:state}).
\vspace{1pt}  
\FOR {time $t$}
\vspace{1pt}
\STATE  Compute $\boldsymbol{W}_{j,t+1}^\textrm{out}$ using (\ref{eq:Wjout}).    
\vspace{1pt}  
\STATE Calculate $\hat{\boldsymbol{W}}_{t+1}^\textrm{out}$ and $\hat{\boldsymbol{n}}_{t}^\textrm{out}$ based on (\ref{eq:Wout}).   
\vspace{1pt}  
\STATE Calculate $\boldsymbol{W}_{t+1}^\textrm{out}$ based on (\ref{eq:Wt}).  
\vspace{1pt} 
\STATE Compute $\boldsymbol{n}_{j,t+1}$ based on (\ref{eq:nj}). 
\vspace{1pt} 
\STATE Compute $\left\|\boldsymbol{r}_{j,t+1}\right\|$ and $\left\|\boldsymbol{s}_{j,t}\right\|$.
\vspace{1pt} 
\STATE If $\left\|\boldsymbol{r}_{j,t+1}\right\| \leqslant \gamma_A$ or $\left\|\boldsymbol{s}_{j,t}\right\|\leqslant \gamma_A$, the algorithm converges.
\ENDFOR  
\end{algorithmic}
\end{algorithm}  

\section{Memory Capacity Analysis } \label{se:mc}     
To improve the prediction accuracy of the proposed algorithm, we analyze the memory capacity of the proposed ESN model. The memory capacity quantifies the ability of each ESN to record the historical locations and orientations of each VR user. As the memory capacity of the ESNs increases, the ESNs can record more historical data related to users' locations and use this information to achieve better prediction\footnote{{\color{black}Here, as the size of the recorded data increases, the ESNs can use more historical data to build a relationship between historical orientations and locations, and future orientations and locations. Hence, the ESN prediction accuracy improves.}} for the users' locations and orientations. {\color{black} The analysis of the ESN memory capacity will be used for the choice of the ESN models for the predictions of the users' locations and orientations. } 
Next, we derive closed-form expressions of the memory capacity of the three ESN models that we described in Section \ref{se:esn}, namely, the single ESN model, the parallel ESN model, and the series ESN model. {\color{black}Note that, our previous work \cite{8614097} analyzed the memory capacity for a centralized parallel ESN model. In contrast, here, we analyze the memory capacity for three ESN models used for federated learning.}   

%

 We assume that the input of each ESN model at time $t$ is $m_t$ and the output of each ESN model is $z_t$. Then, the memory capacity of each ESN model is given by \cite{7880663}: 
\begin{equation}\label{eq:memory}
	M = \sum_{k=1}^{\infty}\frac{\textrm{Cov}^2(m_{t-k},z_t)}{\textrm{Var}(m_t)\textrm{Var}(z_t)},
\end{equation} 
where {\color{black} \textrm{Cov} and \textrm{Var} represent the covariance and variance operators, respectively.}  {\color{black} In (\ref{eq:memory}), $\frac{\textrm{Cov}^2(m_{t-k},z_t)}{\textrm{Var}(m_{t-k})\textrm{Var}(z_t)}$ captures the correlation between the ESN input $m_{t-k}$ at time $t-k$ and the ESN output $z_t$ at time $t$. $\frac{\textrm{Cov}^2(m_{t-k},z_t)}{\textrm{Var}(m_{t-k})\textrm{Var}(z_t)}=1$ indicates that $m_{t-k}$ and $z_t$ are related which means that the output $z_t$ includes the information of $m_{t-k}$ and, hence, the ESN can record input $m_{t-k}$ at time $t$. $\frac{\textrm{Cov}^2(m_{t-k},z_t)}{\textrm{Var}(m_t)\textrm{Var}(z_t)}=0$ indicates that $m_{t-k}$ and $z_t$ are unrelated, which means that $z_t$ does not include any information related to $m_{t-k}$ and, hence, the ESN cannot record $m_{t-k}$. In consequence, $M$ represents the total number of historical input data that each ESN can record.} 
The recurrent matrix $\boldsymbol{W}$ in each ESN model is given by: 
\begin{equation}
\boldsymbol{W}_l=\left[ {\begin{array}{*{20}{c}}
{{0}}&{{0}}& \cdots &{{w}}\\
{{w}}&0&0&0\\
0& \ddots &0&0\\
0&0&{{w}}&0
\end{array}} \right],
\end{equation}
 and the input weight matrix is given by ${\boldsymbol{W}^\textrm{in}} = {\left[ {w_1^\textrm{in}, \ldots ,w_{N_W}^\textrm{in}} \right]^{\rm T}}$. We also define a matrix that will be used to derive the memory capacity of the ESNs, which can be given by:
 \begin{equation}
 \boldsymbol{V}  = \left[ {\begin{array}{*{20}{c}}
{w_1^\textrm{in}}&{w_{N_W}^\textrm{in}}& \cdots &{w_2^\textrm{in}}\\
{w_2^\textrm{in}}&{w_1^\textrm{in}}& \cdots &{w_3^\textrm{in}}\\
 \vdots & \vdots & \cdots & \vdots \\
{w_{N_W}^\textrm{in}}&{w_{N_W-1}^\textrm{in}}& \cdots &{w_1^\textrm{in}}
\end{array}} \right].
\end{equation}
 Based on the above definitions, we can invoke our result from \cite[Theorem 2]{7880663} to derive the memory capacity of single ESN model, which can be given as follows.

\begin{corollary}[Single ESN model]\label{co:1}\emph{Given the recurrent matrix $\boldsymbol{W}$ and the input matrix $\boldsymbol{W}^\textrm{in}$ that guarantees the matrix $\boldsymbol{V}$ regular, the memory capacity of the single ESN model is:
\begin{equation}
M=N_W-1+w^{2N_W}.
\end{equation}
}
\end{corollary}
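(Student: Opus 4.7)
The plan is to exploit the cyclic permutation structure of $\boldsymbol{W}$ so that the infinite unrolled state admits an explicit finite-rank representation, use the regularity of $\boldsymbol{V}$ to guarantee perfect linear readout of the resulting ``folded'' inputs, and then evaluate the memory-capacity sum (\ref{eq:memory}) in closed form via a geometric series. Since Corollary \ref{co:1} is precisely the single-ESN specialization of \cite[Theorem 2]{7880663}, an alternative plan is to invoke that result directly; I will nonetheless carry out the explicit argument so the role played by $\boldsymbol{V}$ becomes transparent.

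First I would unroll the recursion $\boldsymbol{\mu}_t = \boldsymbol{W}\boldsymbol{\mu}_{t-1} + \boldsymbol{W}^{\textrm{in}} m_t$ into the causal series $\boldsymbol{\mu}_t = \sum_{k=0}^{\infty}\boldsymbol{W}^{k}\boldsymbol{W}^{\textrm{in}} m_{t-k}$. Because $\boldsymbol{W}$ is a scaled cyclic permutation, $\boldsymbol{W}^{N_W}=w^{N_W}\boldsymbol{I}$, so writing $k=k_0+jN_W$ with $k_0\in\{0,\ldots,N_W-1\}$ and $j\geq 0$ gives $\boldsymbol{W}^{k}\boldsymbol{W}^{\textrm{in}}=w^{jN_W}\boldsymbol{W}^{k_0}\boldsymbol{W}^{\textrm{in}}$. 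Grouping the unrolled sum by residue class then produces $\boldsymbol{\mu}_t=\sum_{k_0=0}^{N_W-1}\boldsymbol{W}^{k_0}\boldsymbol{W}^{\textrm{in}}\,\tilde m_{t-k_0}$, where the folded input is $\tilde m_{t-k_0}=\sum_{j\geq 0} w^{jN_W} m_{t-k_0-jN_W}$. The regularity hypothesis enters here: the vectors $\boldsymbol{W}^{k_0}\boldsymbol{W}^{\textrm{in}}$ are, up to the scalars $w^{k_0}$, the columns of $\boldsymbol{V}$, so the nonsingularity of $\boldsymbol{V}$ (together with $w\neq 0$) makes them a basis of $\mathbb{R}^{N_W}$, and therefore the optimal readout $\boldsymbol{W}^{\textrm{out}}$ can recover each $\tilde m_{t-k_0}$ exactly from $\boldsymbol{\mu}_t$.

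Next I would evaluate the individual memory terms. Under the standard i.i.d., zero-mean, unit-variance input assumption of \cite{7880663}, a direct computation gives $\textrm{Var}(\tilde m_{t-k_0})=1/(1-w^{2N_W})$ and $\textrm{Cov}(m_{t-k},\tilde m_{t-k_0})=w^{jN_W}$ for $k=k_0+jN_W$. Substituting into (\ref{eq:memory}), the contribution of lag $k$ reduces to $w^{2jN_W}(1-w^{2N_W})$. Summing $j\geq 0$ within each nonzero residue class $k_0\in\{1,\ldots,N_W-1\}$ telescopes to $1$, contributing $N_W-1$ in total; the residue class $k_0=0$ is special because the memory sum starts at $k\geq 1$, so there $j$ starts at $1$, leaving only the leftover $w^{2N_W}$. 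Adding the two yields $M=(N_W-1)+w^{2N_W}$, as claimed.

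The main obstacle is this last bookkeeping step: one must carefully exclude the current input $m_t$ from the memory sum so that the $k_0=0$ residue class is truncated while every other residue class sums over the full geometric series, and one must track the $w^{k_0}$ rescaling that relates $\boldsymbol{V}$ to the state directions $\boldsymbol{W}^{k_0}\boldsymbol{W}^{\textrm{in}}$ without affecting the span. One also needs $|w|<1$ to guarantee convergence of the series and the echo-state property of the reservoir; this is an implicit assumption under which the geometric sums and the state itself are well defined, and the argument closes.
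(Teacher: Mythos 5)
Your proof is correct, but it reaches the result by a genuinely different route from the paper's. The paper's own proof (a condensed version of the Appendix~A computation, of which Corollary~\ref{co:1} is the $L=1$ case) works through the Wiener solution explicitly: it forms the state covariance $\boldsymbol{R}=\tfrac{\sigma^2}{1-w^{2N_W}}\boldsymbol{V}^{\rm T}\Gamma^2\boldsymbol{V}$ and the cross-covariance $\boldsymbol{p}_k=\sigma^2w^{k}\,\mathrm{rot}_k(\boldsymbol{w}^{\textrm{in}}_{1\ldots N_W})$, inverts to get $\boldsymbol{W}^{\textrm{out}}=\boldsymbol{R}^{-1}\boldsymbol{p}_k$, expresses $\mathrm{Cov}(z_t,m_{t-k})$ and $\mathrm{Var}(z_t)$ through the quadratic form $\zeta_k$, and collapses the sum over $k$ using the identities $\zeta_0=\zeta_{N_W}$ and $w^{2k}\zeta_k=1$ for $k=1,\dots,N_W-1$. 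You instead change basis first: from $\boldsymbol{W}^{N_W}=w^{N_W}\boldsymbol{I}$ you fold the unrolled state into $N_W$ residue classes, use the regularity of $\boldsymbol{V}$ (together with $w\neq 0$) to conclude that the directions $\boldsymbol{W}^{k_0}\boldsymbol{W}^{\textrm{in}}$ form a basis so that the state spans exactly the mutually uncorrelated folded inputs $\tilde m_{t-k_0}$, and then each lag's determination coefficient is an elementary squared correlation equal to $w^{2jN_W}(1-w^{2N_W})$. (Your series sum to $1$ per nonzero residue class is a geometric series rather than a telescoping one, but the value is right.) The bookkeeping at $k_0=0$, where the sum starts at $j=1$ and leaves $w^{2N_W}$, matches the paper's $\zeta_0=\zeta_{N_W}$ step. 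Your argument buys interpretability --- it explains where the $N_W-1$ and the residual $w^{2N_W}$ come from and makes the role of $\boldsymbol{V}$'s regularity and of $|w|<1$ explicit --- and avoids the matrix inversion entirely; the paper's mechanical route buys reusability, since the same $\boldsymbol{R}$, $\boldsymbol{p}_k$, $\zeta_k$ machinery is carried over verbatim to the parallel, series, and multi-input cases of Theorems~\ref{th:1}--\ref{th:3}.
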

\begin{proof}{\color{black}Given the input stream vector $\boldsymbol{m}_{\ldots t}=\left[m_1,\ldots,m_{t-1},m_t\right]$, we can calculate the activations $\boldsymbol{\mu}_{j,t}$ using (\ref{eq:state}). The output weight matrix of the ESN model can be given by ${\boldsymbol{W}^\textrm{out}}={\boldsymbol{R}}^{-1}\boldsymbol{p}_{k}$, where ${\boldsymbol{R}}={ {\mathbb E}}\left[\boldsymbol{\mu}_t\left({\boldsymbol{\mu}}_t\right)^{\rm T}\right]$ represents the covariance matrix with $\boldsymbol{\mu}_t=\left[ \mu_{1,t},\ldots, \mu_{N_W,t}\right]$  and $\boldsymbol{p}_{k}={\mathbb E}\left[\boldsymbol{\mu}_tm_{t-k}\right]$. Assume that $\boldsymbol{w}^\textrm{in}_{N_W \ldots 1}=\left[ w_{N_W}^{\textrm{in}}, w_{N_W-1}^{\textrm{in}}, \ldots, w_{1}^{\textrm{in}} \right]$ and ${\mathop {\rm {rot}}}_k\left( \boldsymbol{w}^\textrm{in}_{N_W\ldots 1} \right)$ is an operator that rotates vector $\boldsymbol{w}^\textrm{in}_{N_W\ldots 1}$ by $k$ positions to the right. We have ${\boldsymbol{W}}^\textrm{out}=(1-w^{2N_W})w^{k}{\boldsymbol A}^{-1}{\mathop {\rm rot}}_{k}(\boldsymbol{w}_{1\ldots N_W}^\textrm{in})$, where ${\boldsymbol{A}}=\boldsymbol{V}^{\rm T}\Gamma ^2\boldsymbol{V} $ with $\Gamma=\diag \left(1,w, \ldots, w^{N_W-1}\right)$. Based on $\boldsymbol{W}^\textrm{out}$, we can the covariance of the output with the $k$-slot delayed input, which is given by $\textrm{Cov}(z_t,m_{t-k})= (1-w^{2N_W})w^{2k}\sigma ^2\zeta _k$. We can also obtain $\textrm{Var}(z_t)=\mathbb{E}\left[z_{t} z_{t} \right]=(1-w^{2N_W})w^{2k}\sigma ^2\zeta _k$. Since  $\textrm{Var}(m_t)=\sigma^2$, we have $M= \sum_{k=1}^{\infty}\frac{\textrm{Cov}^2(m_{t-k},z_t)}{\textrm{Var}(m_t)\textrm{Var}(z_t)}= N_W-1+w^{2N_W}$. }
\end{proof} 
From Corollary \ref{co:1}, we can see that the memory capacity of the single ESN model depends on the number of neurons and values of the recurrent matrix. Corollary \ref{co:1} also shows that the memory capacity of the single ESN model will not exceed $N_W$. That means the single ESN model based federated learning algorithm can only record $N_W$ locations or orientations.

Next, we derive the memory capacity of the parallel ESN model, which can be given by the following theorem.

\begin{theorem}[Parallel ESN] \label{th:1}
 \emph{ Given a parallel ESN model during which $L$ ESN models are parallel connected with each other, each ESN model's input weight matrix $\boldsymbol{W}^\textrm{in}$ that guarantees the matrix $\boldsymbol{V}$ regular and recurrent matrix $\boldsymbol{W}$, then the memory capacity of each parallel ESN can be given by:}
 \begin{equation}
  M=N_W-1+w^{2N_W}.
  \end{equation}
\end{theorem}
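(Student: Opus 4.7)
The plan is to reduce the parallel case to the single-ESN case proved in Corollary~\ref{co:1} by arguing that the individual reservoirs in the parallel architecture evolve independently under the shared input stream. First, I would fix one of the $L$ constituent ESNs (say the $\ell$-th) in the parallel configuration shown in Fig.~\ref{fig7b} and write its state recursion $\boldsymbol{\mu}^{(\ell)}_{t}=\boldsymbol{W}\boldsymbol{\mu}^{(\ell)}_{t-1}+\boldsymbol{W}^{\textrm{in}}m_{t}$. Because the parallel blocks do not exchange activations, neither $\boldsymbol{W}$ nor $\boldsymbol{W}^{\textrm{in}}$ couples the $\ell$-th reservoir to the others, and its output $z^{(\ell)}_{t}=(\boldsymbol{W}^{\textrm{out},(\ell)})^{\rm T}\boldsymbol{\mu}^{(\ell)}_{t}$ is a linear functional only of its own state.

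Next, I would reuse the construction in the proof of Corollary~\ref{co:1}: form the covariance matrix $\boldsymbol{R}^{(\ell)}=\mathbb{E}[\boldsymbol{\mu}^{(\ell)}_{t}(\boldsymbol{\mu}^{(\ell)}_{t})^{\rm T}]$ and the cross-covariance $\boldsymbol{p}_{k}^{(\ell)}=\mathbb{E}[\boldsymbol{\mu}^{(\ell)}_{t}m_{t-k}]$, solve the Wiener-Hopf equation $\boldsymbol{W}^{\textrm{out},(\ell)}=(\boldsymbol{R}^{(\ell)})^{-1}\boldsymbol{p}_{k}^{(\ell)}$, and exploit the circulant structure of $\boldsymbol{V}$ to express the solution via the rotation operator $\mathrm{rot}_{k}$. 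Since the recurrent matrix, the input vector, the neuron count $N_{W}$, and the regularity of $\boldsymbol{V}$ are identical to those in Corollary~\ref{co:1}, the same algebra yields $\mathrm{Cov}(m_{t-k},z^{(\ell)}_{t})=(1-w^{2N_{W}})w^{2k}\sigma^{2}\zeta_{k}$ and the matching expression for $\mathrm{Var}(z^{(\ell)}_{t})$. Summing $k$ from $1$ to $\infty$ using the definition in (\ref{eq:memory}) then collapses to $N_{W}-1+w^{2N_{W}}$ for that particular ESN. Because $\ell$ was arbitrary, the claim follows for every one of the $L$ ESNs in the parallel stack.

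The subtle point I would want to nail down, and what I expect to be the main obstacle, is the training convention for $\boldsymbol{W}^{\textrm{out},(\ell)}$. If the output weights of the parallel ESNs were jointly optimized against a single target by regressing onto the concatenated state $[\boldsymbol{\mu}^{(1)}_{t};\ldots;\boldsymbol{\mu}^{(L)}_{t}]$, the effective regressor dimension would grow with $L$ and the per-ESN memory capacity bound in Corollary~\ref{co:1} would no longer transfer unchanged. Thus a preliminary lemma is needed (or an explicit modeling assumption must be invoked) stating that in the parallel architecture each output matrix $\boldsymbol{W}^{\textrm{out},(\ell)}$ is trained only against the activations of its own reservoir; once this is in place, the independence of the $L$ regression problems renders the per-ESN memory capacity calculation identical to that of the single ESN, and Theorem~\ref{th:1} follows immediately.
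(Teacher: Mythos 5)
Your reduction correctly reproduces the per-block computation: each reservoir in the parallel stack evolves autonomously, its output weights are indeed trained only against its own activations (the paper sets $\boldsymbol{W}_l^{\textrm{out}}=\boldsymbol{R}_l^{-1}\boldsymbol{p}_{k,l}$ block by block, so the joint-regression worry you raise does not arise), and Corollary~\ref{co:1} then gives $N_W-1+w^{2N_W}$ for each block in isolation. However, this is not the quantity the theorem is actually about. The paper defines the parallel model's output as the \emph{sum} of the block outputs, $z_t=\sum_{l=1}^{L}z_{l,t}$, and computes the memory capacity of that aggregate signal; the whole point of the theorem (as the discussion following it makes explicit) is that combining $L$ reservoirs does \emph{not} increase the memory capacity beyond that of a single one. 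Your argument establishes the per-block capacity but never touches the summed output, so it does not prove that claim.

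The missing step is precisely where the nontrivial content lies. For the aggregate output one gets $\textrm{Cov}(z_t,m_{t-k})=L\,(1-w^{2N_W})w^{2k}\sigma^2\zeta_k$, which scales linearly in $L$, while
\begin{equation*}
\textrm{Var}(z_t)=\sum_{l=1}^{L}\sum_{p=1}^{L}\mathbb{E}\left[z_{l,t}z_{p,t}\right]=L^2(1-w^{2N_W})w^{2k}\sigma^2\zeta_k
\end{equation*}
scales as $L^2$ \emph{only because} the cross-moments $\mathbb{E}[z_{l,t}z_{p,t}]$ for $l\neq p$ equal the diagonal terms, i.e., the block outputs are perfectly correlated (all blocks reconstruct the same delayed-input component from the same input stream). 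The $L^2$ in $\textrm{Cov}^2$ then cancels against the $L^2$ in $\textrm{Var}(z_t)$, leaving $N_W-1+w^{2N_W}$. Had the block outputs been only partially correlated, the variance would scale more slowly than $L^2$ and the aggregate memory capacity would exceed the single-ESN value, so the cancellation is a genuine fact to be verified, not an automatic consequence of your per-block reduction. To complete your proof you would need to add exactly this covariance/variance scaling argument for $\sum_l z_{l,t}$.
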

\begin{proof} See Appendix A.
\end{proof}
 Theorem \ref{th:1} shows that the memory capacity of a parallel ESN model is similar to the memory capacity of a single ESN. Hence, adding multiple ESN models will not increase the memory capacity. This is due to the fact that, in a parallel ESN model, there is no connection among the ESNs, as shown in Fig. \ref{fig7b}. Therefore, the input of the parallel ESN model will separately connect to each single ESN and, hence, the parallel ESN models do not need to use more neurons to record the input data compared to the single ESN model. Theorem \ref{th:1} also shows that the memory capacity of a parallel ESN depends on the number of neurons in each ESN model and the values of the recurrent weight matrix of each ESN model. Accordingly, we can increase the value of output weight matrix and the number of neurons in each ESN model to increase the memory capacity of the parallel ESN models. As the memory capacity of the parallel ESN models increases, BSs can record more users' data to predict the users' locations and orientations accurately.    
Next, we derive the memory capacity of the series ESN model.
\begin{theorem}[Series ESN model]\label{th:2}
\emph{ Given a series ESN model during which $L$ ESN models are series connected with each other, a recurrent matrix $\boldsymbol{W}$ of each ESN model, and each ESN model's input weight matrix $\boldsymbol{W}^\textrm{in}$ that guarantees the matrix $\boldsymbol{V}$ regular, the memory capacity of each series ESN model is:}
\begin{equation}
M =    {\left(1-w^{2N_W}\right)}^{L-1}\left(N_W-1+{w^{2N_W}}\right).
\end{equation}
\end{theorem}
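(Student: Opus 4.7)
The plan is to proceed by induction on the number $L$ of stages in the chain, reusing at each step the linear-algebra recipe sketched in the proof of Corollary \ref{co:1}. Denote the output of stage $\ell$ by $z^{(\ell)}_t$, with $z^{(0)}_t := m_t$, so that stage $\ell$ drives its reservoir via $\boldsymbol{\mu}^{(\ell)}_t = \boldsymbol{W}\boldsymbol{\mu}^{(\ell)}_{t-1} + \boldsymbol{W}^{\textrm{in}} z^{(\ell-1)}_t$, and the readout of every stage is trained to reconstruct the $k$-delayed original input $m_{t-k}$, since that is the quantity appearing in the memory-capacity definition (\ref{eq:memory}). The base case $L=1$ is Corollary \ref{co:1}.

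For the inductive step, I would first unroll the series recursion to write $z^{(\ell-1)}_t = \sum_{j\ge 0}\alpha^{(\ell-1)}_j\, m_{t-j}$, where the coefficients $\alpha^{(\ell-1)}_j$ follow from applying the optimal $(\ell{-}1)$-st readout (supplied by the inductive hypothesis) to the preceding reservoir state. Substituting this expansion into $\boldsymbol{R}_{(\ell)} = \mathbb{E}[\boldsymbol{\mu}^{(\ell)}_t (\boldsymbol{\mu}^{(\ell)}_t)^{\rm T}]$ and $\boldsymbol{p}^{(\ell)}_k = \mathbb{E}[\boldsymbol{\mu}^{(\ell)}_t m_{t-k}]$, I would then diagonalise $\boldsymbol{R}_{(\ell)}$ using the same cyclic-shift eigenstructure of $\boldsymbol{W}$ and the regularity of $\boldsymbol{V}$ that were exploited in Corollary \ref{co:1}, so that the $\Gamma = \textrm{diag}(1,w,\ldots,w^{N_W-1})$ weighting reappears at the current stage. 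A direct calculation should then show that passing the signal through one extra stage attenuates the normalised squared correlation $\textrm{Cov}^2(m_{t-k}, z^{(\ell)}_t) / \bigl(\textrm{Var}(m_{t-k})\,\textrm{Var}(z^{(\ell)}_t)\bigr)$ by exactly one factor $(1-w^{2N_W})$ at every delay $k$, which upon summing over $k$ as in (\ref{eq:memory}) yields the claimed $(1-w^{2N_W})^{L-1}(N_W-1+w^{2N_W})$.

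The main obstacle is controlling the autocorrelation of the intermediate signal $z^{(\ell-1)}_t$. Unlike in Corollary \ref{co:1} and in the parallel model of Theorem \ref{th:1}, where each reservoir is driven by a white input and $\boldsymbol{R}$ acquires the clean structure $\boldsymbol{V}^{\rm T}\Gamma^2 \boldsymbol{V}$, here $z^{(\ell-1)}_t$ inherits nontrivial autocorrelation from all earlier stages, so $\boldsymbol{R}_{(\ell)}$ picks up cross-delay terms weighted by the $\alpha^{(\ell-1)}_j$. Showing that these cross-terms conspire to produce only a single scalar attenuation factor $(1-w^{2N_W})$ per stage — rather than a more complicated dependence on $\ell$ — is the technical heart of the argument, and I would handle it by working in the eigenbasis of the cyclic-shift matrix $\boldsymbol{W}$ (whose eigenvalues are $w$ times roots of unity), where the stagewise recursion becomes diagonal and the per-stage attenuation collapses to the desired scalar.
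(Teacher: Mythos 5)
Your plan follows the same route as the paper's Appendix B: both argue stage by stage that each additional reservoir in the chain multiplies $\textrm{Cov}(z_t^{(\ell)}, m_{t-k})$ by a single factor $\left(1-w^{2N_W}\right)$, and then perform the identical geometric-series summation over the delay $k$ to arrive at ${\left(1-w^{2N_W}\right)}^{L-1}\left(N_W-1+w^{2N_W}\right)$. The one point where you go further is worth noting: the autocorrelation of the intermediate signal $z^{(\ell-1)}_t$, which you single out as the technical heart, is exactly the step the paper's ``enumeration method'' asserts without justification when it writes down $\textrm{Cov}(z_t^{(2)}, m_{t-k})$ and $\textrm{Cov}(z_t^{(3)}, m_{t-k})$, so your proposed eigenbasis computation would, if carried out, tighten the published argument rather than diverge from it.
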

\begin{proof} See Appendix B.
\end{proof}
From Theorem \ref{th:2}, we can see that  the memory capacity of each series ESN model is smaller than the memory capacity of a single ESN or a series ESN. Theorem \ref{th:2} also shows that the memory capacity of each series ESN model decreases as the number of ESN models $L$ increases. Thus, it would be better to use a single ESN model or a parallel ESN model to predict the users' locations and orientations.  

Theorems \ref{th:1} and \ref{th:2} derive the memory capacities of the parallel ESN model and the series ESN model with single input. Next, we formulate the memory capacity of a single ESN model given multiple inputs, which is given by the following theorem.

\begin{theorem}[Multi-input single ESN]\label{th:3}
\emph{  Consider a single ESN with a recurrent matrix $\boldsymbol{W}$, input vector $\boldsymbol{m}_t=\left[m_{1t},\ldots, m_{Kt}\right]$, the input weight matrix $\boldsymbol{W}^\textrm{in}$ that guarantees the matrix $\boldsymbol{V}$ regular, the memory capacity of each single ESN is
  \begin{equation}
    M =  \left(\frac{\sum_{l=1}^{K}\sigma _l^2}{\sum_{k=1}^{K}\sum_{n=1}^{K}\rho _{kn}\sigma _k\sigma _n}\right)^2\left(N_W-1+w^{2N_W}\right),
  \end{equation}
  where $\rho _{kn}$ represents the correlation coefficient between input $m_{kt}$ and $m_{nt}$.
  }
  
\end{theorem}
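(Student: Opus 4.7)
The plan is to extend the derivation used in Corollary \ref{co:1} (single input, single ESN) to the case of $K$ correlated input streams, following the same three steps but carefully tracking where the cross-correlations $\rho_{kn}$ enter the variance of the ESN output. The skeleton of Corollary \ref{co:1}'s argument, namely writing the activations in closed form, solving for the optimal output weights via $\boldsymbol{W}^{\textrm{out}}=\boldsymbol{R}^{-1}\boldsymbol{p}_k$, and then plugging into (\ref{eq:memory}), survives the generalization essentially verbatim; only $\boldsymbol{R}$ and the variance of $z_t$ need to be reworked.

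First, I would unroll the state recursion $\boldsymbol{\mu}_t=\boldsymbol{W}\boldsymbol{\mu}_{t-1}+\boldsymbol{W}^{\textrm{in}}\boldsymbol{m}_t$ with $\boldsymbol{W}^{\textrm{in}}$ now $N_W\times K$, and exploit the fact that $\boldsymbol{W}$ is still the same shift-structured matrix whose powers produce the rotation operator $\mathrm{rot}_k(\cdot)$ used in Corollary \ref{co:1}. Regularity of $\boldsymbol{V}$ (built from each column of $\boldsymbol{W}^{\textrm{in}}$) carries the argument through verbatim for the cross-covariance $\boldsymbol{p}_k=\mathbb{E}[\boldsymbol{\mu}_t m_{t-k}]$ because only the targeted input stream contributes to $\boldsymbol{p}_k$, so its closed form matches the single-input expression up to a factor of $\sigma^2$ that is replaced by the variance of the targeted input.

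Second, I would compute $\boldsymbol{R}=\mathbb{E}[\boldsymbol{\mu}_t\boldsymbol{\mu}_t^{\rm T}]$ for the multi-input case. This is where the cross-correlations enter: the quadratic form involving $\boldsymbol{W}^{\textrm{in}}\mathbb{E}[\boldsymbol{m}_t\boldsymbol{m}_t^{\rm T}](\boldsymbol{W}^{\textrm{in}})^{\rm T}$ produces the full double sum $\sum_{k,n}\rho_{kn}\sigma_k\sigma_n$, whereas if the inputs were uncorrelated the sum would collapse to $\sum_l\sigma_l^2$. Solving $\boldsymbol{W}^{\textrm{out}}=\boldsymbol{R}^{-1}\boldsymbol{p}_k$ then gives an expression identical in $k$ to that of Corollary \ref{co:1}, but scaled by the ratio $\sum_l\sigma_l^2 / \sum_{k,n}\rho_{kn}\sigma_k\sigma_n$.

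Third, I would plug into $M=\sum_{k=1}^{\infty}\textrm{Cov}^2(m_{t-k},z_t)/(\textrm{Var}(m_t)\textrm{Var}(z_t))$. The geometric series in $w^{2k}$ collapses exactly as in Corollary \ref{co:1} to yield the factor $(N_W-1+w^{2N_W})$, while the aforementioned ratio appears squared once the Cauchy--Schwarz-type quotient of variances is formed, producing the claimed prefactor $\bigl(\sum_l\sigma_l^2/\sum_{k,n}\rho_{kn}\sigma_k\sigma_n\bigr)^2$. The main obstacle I expect is bookkeeping: making sure that the inverse of the multi-input $\boldsymbol{R}$ (which inherits its block structure from $\boldsymbol{V}^{\rm T}\Gamma^2\boldsymbol{V}$ scaled by the input-covariance matrix) factors cleanly so that the $k$-dependence separates from the variance mismatch factor. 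If that separation fails, the sum over $k$ will not decouple and the closed form will not emerge, so the regularity of $\boldsymbol{V}$ and the circulant structure of $\boldsymbol{W}$ must be leveraged to force the factorization.
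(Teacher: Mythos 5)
Your overall skeleton matches the paper's Appendix C: write $\boldsymbol{W}^{\textrm{out}}=\boldsymbol{R}^{-1}\boldsymbol{p}_k$, work out how the input covariance enters $\boldsymbol{R}$, and push the result through the same geometric-series collapse used for Corollary \ref{co:1}. The only structural difference is that the paper proceeds by enumeration --- it computes the $K=2$ case in full, states the $K=3$ result, and then asserts the general-$K$ formula --- whereas you attack general $K$ directly, which is if anything cleaner.

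However, one step in your plan is inconsistent both with the paper's computation and with the formula you are trying to prove. You claim that ``only the targeted input stream contributes to $\boldsymbol{p}_k$, so its closed form matches the single-input expression up to a factor of $\sigma^2$ that is replaced by the variance of the targeted input.'' In the paper's $K=2$ derivation the cross-covariance is $\boldsymbol{p}_k = w^k(\sigma_1^2+\sigma_2^2)\,\mathrm{rot}_k(\boldsymbol{V}_{1\ldots N})$: \emph{every} stream contributes its variance to $\boldsymbol{p}_k$, while the cross-terms $\rho_{kn}\sigma_k\sigma_n$ enter only through $\boldsymbol{R}$ via $\mathbb{E}\left[\boldsymbol{m}_t\boldsymbol{m}_t^{\rm T}\right]$. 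It is precisely this asymmetry --- $\sum_{l}\sigma_l^2$ in $\boldsymbol{p}_k$ versus $\sum_{k,n}\rho_{kn}\sigma_k\sigma_n$ in $\boldsymbol{R}$ --- that produces the squared ratio in the theorem. If only the targeted stream's variance $\sigma_j^2$ entered $\boldsymbol{p}_k$, the prefactor would come out as $\bigl(\sigma_j^2/\sum_{k,n}\rho_{kn}\sigma_k\sigma_n\bigr)^2$ (or a sum of such terms over $j$), not $\bigl(\sum_{l}\sigma_l^2/\sum_{k,n}\rho_{kn}\sigma_k\sigma_n\bigr)^2$; indeed you silently switch to the correct numerator $\sum_{l}\sigma_l^2$ in your second step, so the two halves of your argument do not fit together. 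The fix is to treat the delayed target as the aggregate of the $K$ delayed inputs, so that $\boldsymbol{p}_k=\mathbb{E}\left[\boldsymbol{\mu}_t \boldsymbol{m}_{t-k}\right]$ accumulates $\sum_{l}\sigma_l^2$, and to use the identity $\mathrm{Var}(z_t)=\mathrm{Cov}(z_t,\boldsymbol{m}_{t-k})$ that the paper invokes when forming the quotient in \eqref{eq:memory}; with those two corrections the rest of your plan goes through.
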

\begin{proof} See Appendix C.
\end{proof}

From Theorem \ref{th:3}, we can observe that the correlation among input elements in vector $\boldsymbol{m}_t$ will affect the memory capacity of each ESN model. In particular, as the correlation of the input data increases, the memory capacity of the ESN model increases. This is because the ESN can use more input data to predict the users' locations and orientations, hence improving the predictions accuracy. Therefore, it would be better to jointly predict the users' locations and orientations.

 Theorems \ref{th:1}-\ref{th:3} allow each BS to determine its ESN model, the number of neurons $N_W$ in each ESN model, and the values of the recurrent matrix $\boldsymbol{W}$ as the size of the data collected by each BS changes. A parallel ESN model has a larger memory capacity compared with the series ESN model and is more stable than the single ESN model, and, hence, a parallel ESN model can record more historical data to predict the users' orientations and locations so as to improve the prediction accuracy. As the prediction accuracy is improved, the BSs can determine the user association more accurately. Hence, the BIP of the users can be minimized. Therefore, we use the parallel ESN model in our proposed algorithm. 

%

 \section{User Association for VR Users} 
Based on the analysis presented in Sections \ref{se:esn} and \ref{se:mc}, each BS can predict the users' locations and orientations. Next, we explain how to use these predictions to find the user association for each VR user. Given the predictions of the locations and orientations, the BIP minimization problem in (\ref{eq:max}) can be rewritten as follows: 
\begin{equation}\label{eq:finalmax}
\mathop {\min }\limits_{\boldsymbol{a}_{i,t}^\textrm{UL}, \boldsymbol{a}_{i,t}^\textrm{DL}}\sum\limits_{i \in \mathcal{U} }P_{i}\left(\hat x_{it}, \hat y_{it}, G_A, \hat \chi_{it}, \boldsymbol{a}_{i,t}^\textrm{UL}, \boldsymbol{a}_{i,t}^\textrm{DL}\right).
\end{equation}
We use the reinforcement learning algorithm given in \cite{5700414} to find a sub-optimal solution of the problem in (\ref{eq:finalmax}). {\color{black}In the reinforcement learning algorithm given in \cite{5700414}, the actions are the user association schemes, the states are the strategies of other BSs, and the output is the estimated BIP. Hence, this reinforcement learning algorithm can learn the VR users state and exploit different actions to adapt the user association according to the predictions of the users' locations and orientations.} After the learning step, each BS will find a sub-optimal user association to service the VR users.
{\color{black}To simplify the learning process and improve the convergence speed, we first select the uplink user association scheme. This is because as the uplink user association is determined, the BSs that the users can associate in downlink will be determined, as follows:}

 \begin{proposition}\label{th:4}
\emph{Given the predicted location and orientation of user $i$ at time $t$ as well as the uplink user association ${a}_{i*,t}^\textrm{UL}$, the downlink cell association for a VR user $i$ is:
\begin{equation}\label{eq:content}
 {a}_{ik,t}^\textrm{DL}= \mathds{1}_{\left\{ \frac{D\left( \boldsymbol{l}_{i,t}\left({a}_{ik,t}^\textrm{DL}c_{ik}^\textrm{DL}\left(\hat x_{it}, \hat y_{it}, b_i\left(\hat {\chi}_{it}\right), n_{ik}\right)\right)\right)}{{a}_{ik,t}^\textrm{DL}c_{ik}^\textrm{DL}\left(\hat x_{it}, \hat y_{it}, b_i\left(\hat {\chi}_{it}\right), n_{ik}\right)}  \leqslant \gamma_\textrm{D}-\frac{A}{{a}_{i*,t}^\textrm{UL}c_{i*}^\textrm{UL}\left(\hat x_{it}, \hat y_{it}\right)}\right\}}  \wedge  \mathds{1}_{\left\{ \boldsymbol{l}_{i,t}\left({a}_{ik,t}^\textrm{DL}c_{ik}^\textrm{DL}\left( \hat x_{it}, \hat y_{it}, b_i\left(\hat {\chi}_{it}\right), n_{ik}\right)\right)\boldsymbol{m}_{i,t}\left( G_A\right)  \geqslant \gamma_\textrm{Q} \right\}},
\end{equation}
where $ {a}_{ik,t}^\textrm{DL}$ is the downlink user association obtained in (\ref{eq:content}). $c_{i*}^\textrm{UL}\left(x_{it},y_{it}\right)$ is the uplink data rate of user $i$.
}
\end{proposition}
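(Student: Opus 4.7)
The plan is to begin from the per-slot BIP indicator $\omega_{it}$ in (\ref{eq:BIP}) and observe that the awareness term $\epsilon_i$, the interaction $\epsilon_{G_A|i}$, the random effect $\epsilon_B$, and the application term $G_A$ in (\ref{eq:toBIP}) do not depend on $\boldsymbol{a}_{i,t}^\textrm{UL}$ or $\boldsymbol{a}_{i,t}^\textrm{DL}$. Hence, for a fixed horizon $T$, minimizing $P_i$ is equivalent to driving $\omega_{it}=0$ whenever feasible on a per-slot basis. Since the uplink association ${a}_{i*,t}^\textrm{UL}$ is assumed fixed, the uplink delay $A/(a_{i*,t}^\textrm{UL} c_{i*}^\textrm{UL}(\hat x_{it},\hat y_{it}))$ is a constant with respect to the downlink variables, so only the downlink rate $c_{ik}^\textrm{DL}$ couples to ${a}_{ik,t}^\textrm{DL}$.

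Next, I would negate the disjunction inside the indicator in (\ref{eq:BIP}): $\omega_{it}=0$ holds if and only if both the total delay stays at or below $\gamma_\textrm{D}$ and the weighted received-pixel mass $\boldsymbol{l}_{i,t}\boldsymbol{m}_{i,t}$ stays at or above $\gamma_\textrm{Q}$. Separating the fixed uplink contribution from the total-delay condition and substituting the predicted location $(\hat x_{it},\hat y_{it})$ and orientation $\hat{\chi}_{it}$ into the blockage terms $b_i$ and $n_{ik}$ decouples the downlink decision: the residual-delay budget becomes $\gamma_\textrm{D} - A/(a_{i*,t}^\textrm{UL} c_{i*}^\textrm{UL}(\hat x_{it},\hat y_{it}))$, and the quality constraint retains its original form. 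Setting ${a}_{ik,t}^\textrm{DL}=1$ precisely on BSs that satisfy both requirements and $0$ otherwise yields the conjunction of indicators in (\ref{eq:content}).

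The main obstacle, I expect, is reconciling the per-user indicator prescription in (\ref{eq:content}) with the capacity coupling (\ref{eq:max}a), which limits each BS to at most $V$ associated users, and with the uniqueness constraint $\sum_{k} {a}_{ik,t}^\textrm{DL} = 1$ of (\ref{eq:max}c). When several BSs simultaneously satisfy both inequalities for user $i$, or when a BS is simultaneously feasible for more than $V$ users, (\ref{eq:content}) alone does not single out a unique allocation. I would handle this by appealing to the reinforcement learning procedure invoked immediately before the proposition: (\ref{eq:content}) should be interpreted as characterizing the \emph{feasible set} of downlink BSs for user $i$, from which the learning algorithm selects a single member subject to (\ref{eq:max}a)--(\ref{eq:max}c). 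A final sanity check is that the predicted quantities $\hat x_{it}, \hat y_{it}, \hat{\chi}_{it}$ enter every expression in exactly the functional form of their true counterparts in (\ref{eq:BIP}), which holds because the BIP model is a pointwise function of location and orientation only.
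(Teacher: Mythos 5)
Your proposal is correct and follows essentially the same route as the paper's proof: fix the uplink association, subtract the uplink delay from the total budget $\gamma_\textrm{D}$ to obtain the residual downlink delay requirement, negate the disjunction in (\ref{eq:BIP}), and set the association indicator to the conjunction of the delay and quality conditions evaluated at the predicted location and orientation. Your additional observation --- that (\ref{eq:content}) only characterizes a feasible set of BSs and must still be reconciled with constraints (\ref{eq:max}a) and (\ref{eq:max}c) by the subsequent reinforcement-learning selection step --- is a point the paper's own proof does not address, and it is a fair and worthwhile clarification rather than a flaw in your argument.
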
 
\begin{proof} For downlink user association, each VR user $i$ needs to find a BS that can guarantee the transmission delay and VR video quality. Since we have determined the user association over uplink, the maximum time used for VR video transmission can be given by $\gamma_\textrm{D}-\frac{A}{{a}_{i*,t}^\textrm{UL}c_{i*}^\textrm{UL}\left(\hat x_{it}, \hat y_{it}\right)}$. Consequently, user $i$ needs to connect with a BS that can satisfy the transmission delay requirement of user $i$, i.e., $\frac{D\left( \boldsymbol{l}_{i,t}\left({a}_{ik,t}^\textrm{DL}c_{ik}^\textrm{DL}\left(\hat x_{it},\hat y_{it}, b_i\left(\hat{\chi}_{it}\right), n_{ik}\right)\right)\right)}{{a}_{ik,t}^\textrm{DL}c_{ik}^\textrm{DL}\left(\hat x_{it},\hat y_{it}, b_i\left(\hat{\chi}_{it}\right), n_{ik}\right)}  \leqslant \gamma_\textrm{D}-\frac{A}{{a}_{i*,t}^\textrm{UL}c_{i*}^\textrm{UL}\left(\hat x_{it},\hat y_{it}\right)}$. Moreover, user $i$ needs to associate with a BS that can meet the requirement of VR video quality, i.e., $\boldsymbol{l}_{i,t}\left({a}_{ik,t}^\textrm{DL}c_{ik}^\textrm{DL}\left( \hat x_{it}, \hat y_{it}, b_i\left(\hat {\chi}_{it}\right), n_{ik}\right)\right)\boldsymbol{m}_{i,t}\left( G_A\right)  \geqslant \gamma_\textrm{Q}$. Thus, if BS $k$ can satisfy the conditions: $\frac{D\left( \boldsymbol{l}_{i,t}\left({a}_{ik,t}^\textrm{DL}c_{ik}^\textrm{DL}\left(\hat x_{it},\hat y_{it}, b_i\left(\hat{\chi}_{it}\right), n_{ik}\right)\right)\right)}{{a}_{ik,t}^\textrm{DL}c_{ik}^\textrm{DL}\left(\hat x_{it},\hat y_{it}, b_i\left(\hat{\chi}_{it}\right), n_{ik}\right)}  \leqslant \gamma_\textrm{D}-\frac{A}{{a}_{i*,t}^\textrm{UL}c_{i*}^\textrm{UL}\left(\hat x_{it},\hat y_{it}\right)}$ and $\small \boldsymbol{l}_{i,t}\left({a}_{ik,t}^\textrm{DL}c_{ik}^\textrm{DL}\left( \hat x_{it}, \hat y_{it}, b_i\left(\hat {\chi}_{it}\right), n_{ik}\right)\right)\boldsymbol{m}_{i,t}\left( G_A\right)  \geqslant \gamma_\textrm{Q}$, user $i$ can associate with it. This completes the proof.    
\end{proof}
From Proposition \ref{th:4}, we can see that the user association of each user $i$ depends on user $i$'s location and orientation.
Proposition \ref{th:4} shows that, for each user $i$, the uplink user association will affect the downlink user association. This is due to the fact that the VR system has determined the total transmission delay of each user. As a result, when the uplink user association is determined, the uplink transmission delay and the requirement of the downlink transmission delay will be determined. 

   \section{Simulation Results and Analysis}

\begin{table}
  \newcommand{\tabincell}[2]{\begin{tabular}{@{}#1@{}}#2\end{tabular}}
\renewcommand\arraystretch{0.7}
 \caption{
    \vspace*{-0.3em} \scriptsize SYSTEM PARAMETERS}\label{ta:parameters}\vspace*{-1em}
\centering  
\begin{tabular}{|c|c|c|c|c|c|}
\hline
\textbf{Parameter}& \textbf{Value} &\textbf{Parameter} & \textbf{Value} &\textbf{Parameter}& \textbf{Value} \\
\hline
$P_B $ & 30 dBm & $d_0$ & 5 m&$Y$&$10$  \\
\hline
$P_U$ & 10 dBm & $f_c$ & 28 GHz & $T$ & 5 \\
\hline
$\sigma$ & -94 dBm& $c$ & $3\times 10^8$ m/s &$\gamma$&0.5  \\
\hline
 $F^\textrm{UL}$ & 10 Mbit& ${\color{black}\varpi_\textrm{LoS}}, {\color{black}\varpi_\textrm{NLoS}}$ & 2, 2.4 &$\lambda$ & 0.005\\
\hline
$F^\textrm{DL}$ & 10 Mbit & ${\mu_{\sigma_\textrm{LoS}}}$, ${\mu_{\sigma_\textrm{NLoS}}}$ & 5.3, 5.27&$w$ &0.98\\
\hline
$N_W $ & 30 &$G_A$&11& $L$ & 3\\
\hline
 $\beta$  & 2 &$\gamma_{D}$& 10 ms& $V$ & 10 \\
\hline
$M$ & 15 dB & $\gamma_Q$ & 0.8 & $\vartheta  $ & 2 \\
\hline
$m$&0.7 dB& ${\sigma_i}^2$&0.193&${\sigma_{G_A|i}^2}$ & 0.151  \\
\hline
$\phi $ & $30^\circ$&$A$&50 kbits &${\sigma_B}^2$ & 0.05  \\
\hline
\end{tabular}
 \vspace{-0.7cm}
\end{table}

  For our simulations, we consider a circular area with radius $r = 500$ m, $U=20$ wireless VR users, and $B=5$ BSs distributed uniformly. {\color{black} To simulate blockage, each user is considered as a two-dimensional point. For simplicity, we ignore the altitudes of the BSs and the height of the users. If blockage points are located between a user and its BS, the communication link will be considered to be NLoS.} Real data traces for locations are collected from 50 students at the Beijing University of Posts and Telecommunications. The locations of each student is collected every hour during 9:00 am -- 9:00 pm. {\color{black} For orientation data collection, we searched 25 videos related to a first-person shooter game from youTube. Then, we input these VR videos to HTC Vive devices. The HTC Vive deveploper system can directly measure the movement of the VR videos using HTC Vive devices. We arbitrarily combine one user's locations with one orientation for each VR user.} In simulations, a parallel ESN model{\footnote{\color{black}The code can be found in https://github.com/lasisal/deepESN.}} is used for the proposed algorithm due to its stability and large memory capacity. The other system parameters are listed in Table \ref{ta:parameters}. 
  For comparison purposes, we consider the deep learning algorithm in \cite{6503981} and the ESN algorithm in \cite{chen2016caching}, as two baseline schemes. {\color{black}The deep learning algorithm in \cite{6503981} is a deep autoencoder that consists of multiple layers of restricted Boltznann machines. The centralized ESN-based learning algorithm in \cite{chen2016caching} is essentially a single layer ESN algorithm. The input and output of the centralized ESN and deep learning algorithms are similar to the proposed algorithm. However, for the deep learning algorithm and the centralized ESN algorithm, each BS can use only its collected data to train the learning model. Both the centralized and deep learning algorithms are trained in an offline manner.} All statistical results are averaged over a large number of independent runs.

\begin{figure}[!t]
\centering
\vspace{0cm}
\subfigure[]{
\label{fig3a} 
\includegraphics[width=5.2cm]{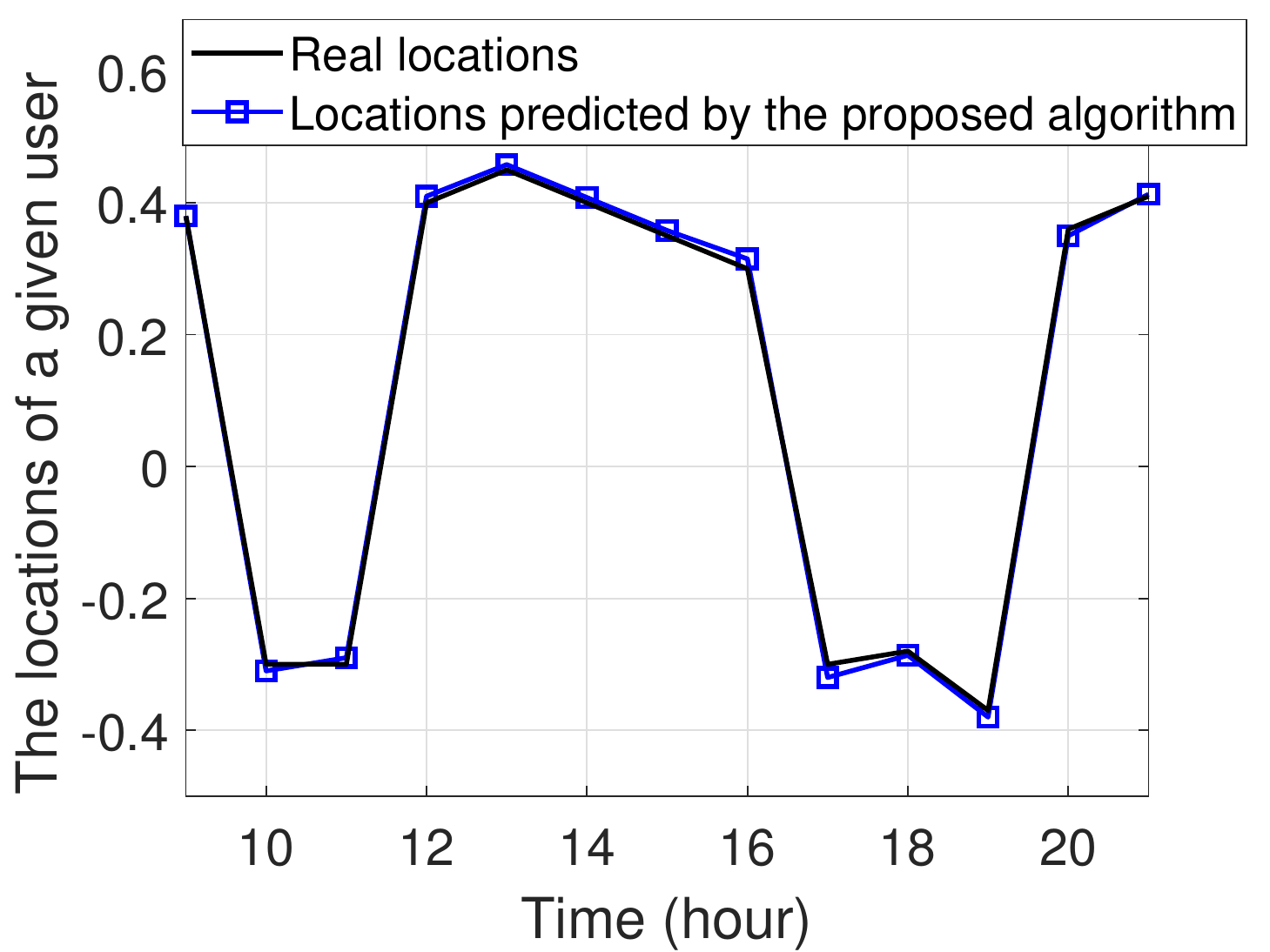}}
\subfigure[]{ 
\label{fig3b} 
\includegraphics[width=5.2cm]{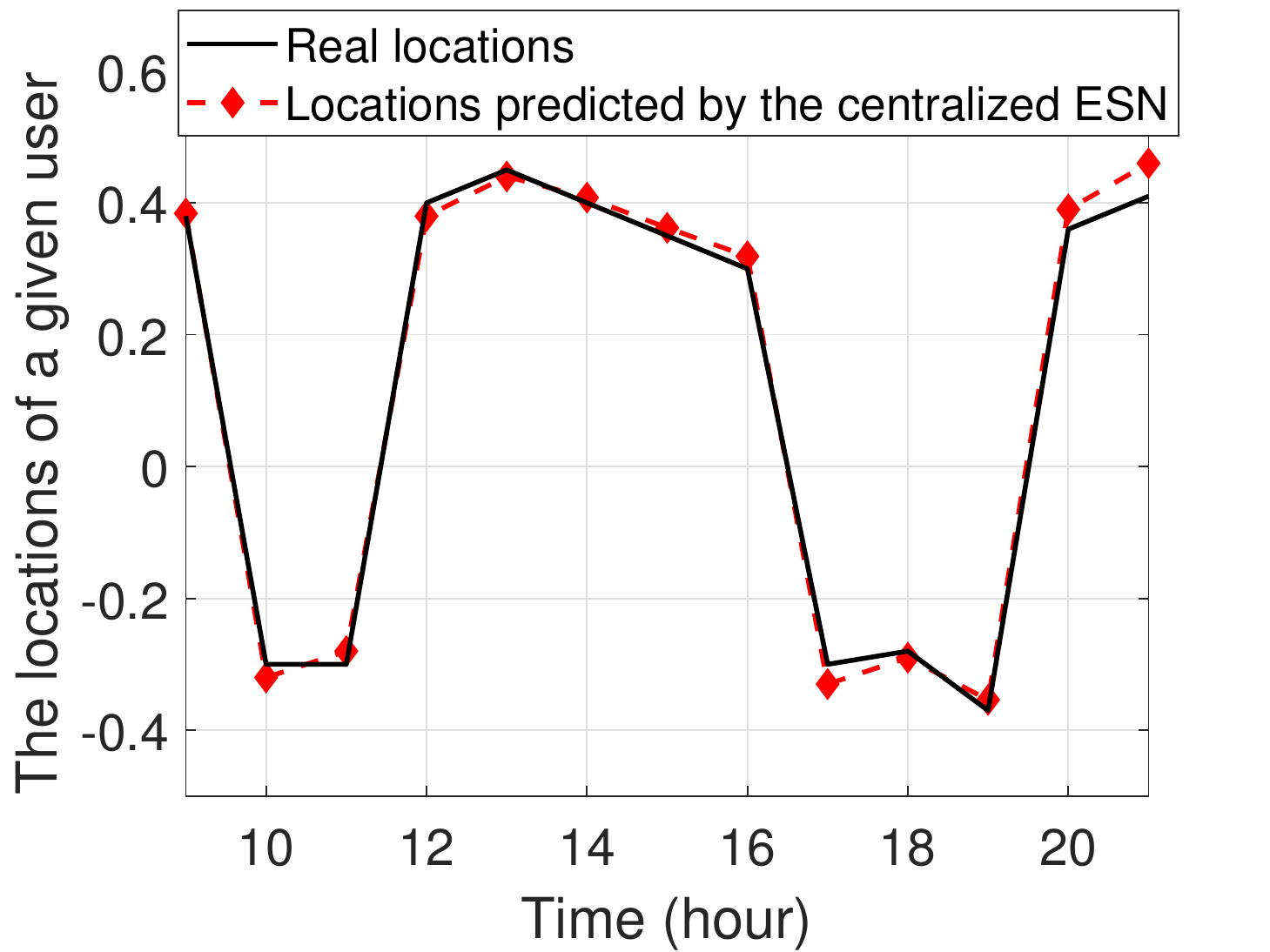}}
\subfigure[]{ 
\label{fig3b} 
\includegraphics[width=5.2cm]{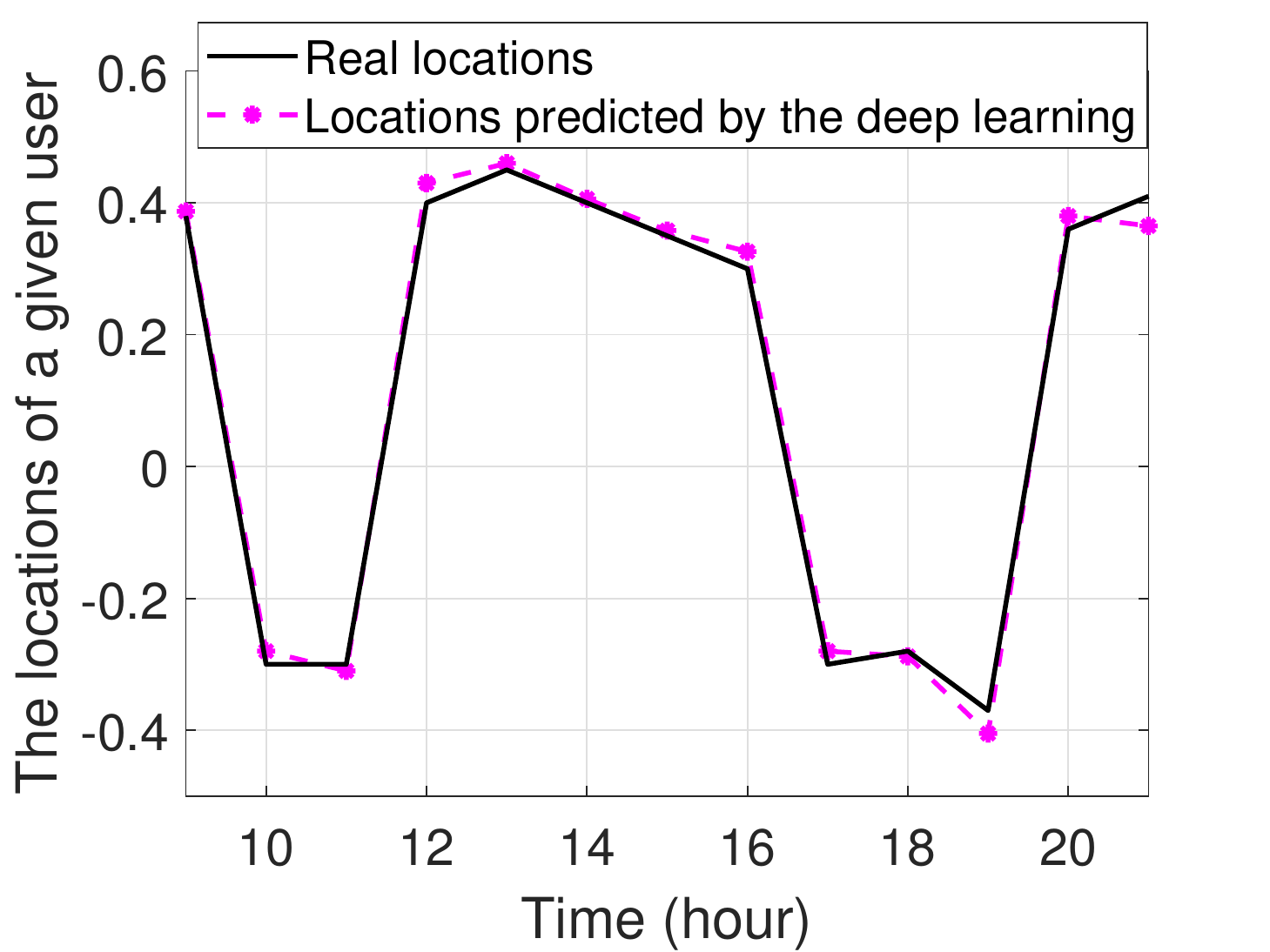}}
  \vspace{-0.4cm}
 \caption{\label{figure3} Predictions of the VR users' orientations and locations as time elapses.}
 \vspace{-0.1cm}
\end{figure}

\begin{figure}[!t]
\centering
\vspace{0cm}
\subfigure[]{
\label{fig4a} 
\includegraphics[width=5.2cm]{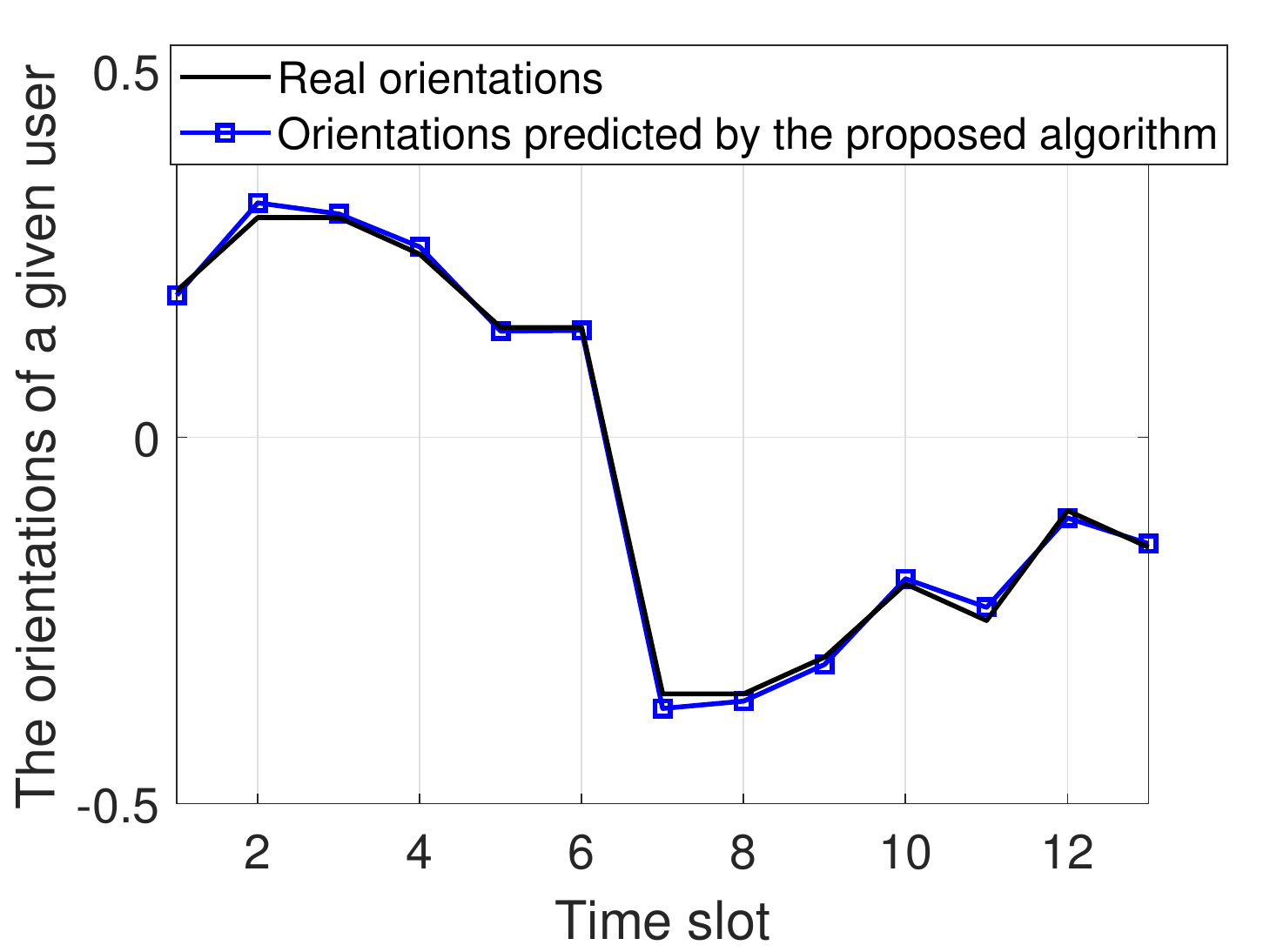}}
\subfigure[]{ 
\label{fig4b} 
\includegraphics[width=5.2cm]{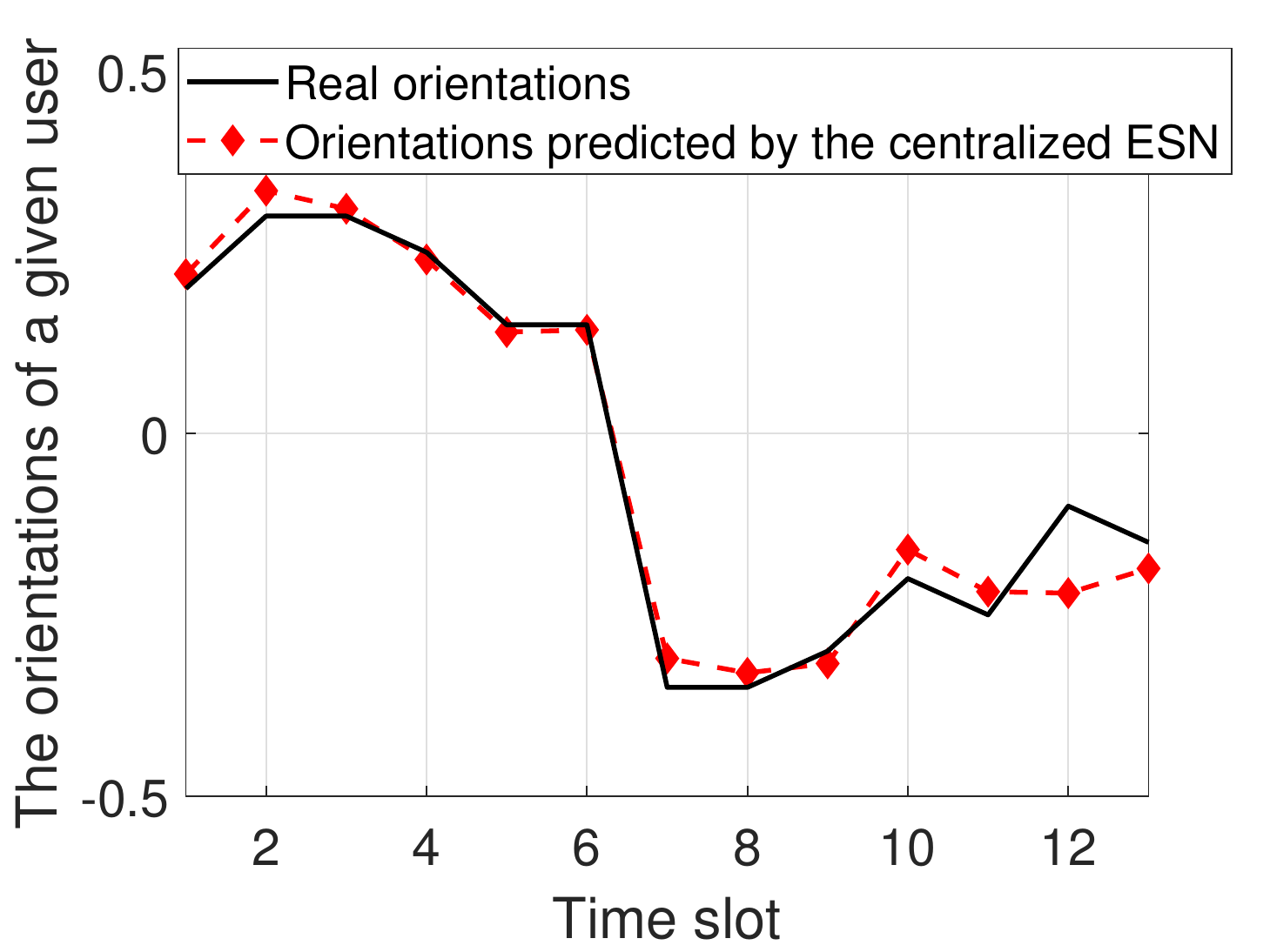}}
\subfigure[]{ 
\label{fig4c} 
\includegraphics[width=5.2cm]{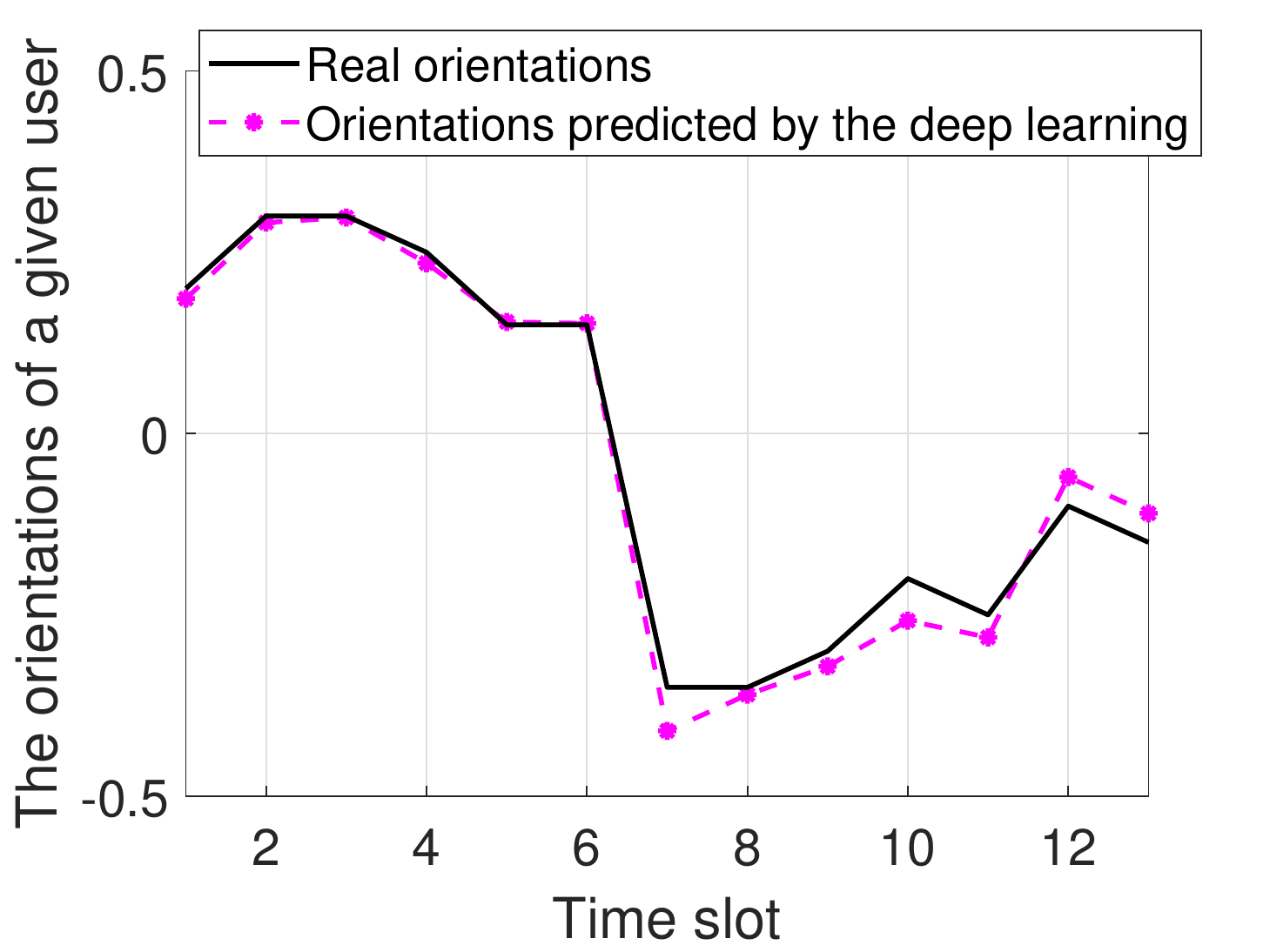}}
  \vspace{-0.4cm}
 \caption{\label{figure4} Predictions of the VR users' orientations and locations as time elapses.}
 \vspace{-0.8cm}
\end{figure}

Figs. \ref{figure3} and \ref{figure4} show the predictions of the VR users' locations and orientations as time elapses. To simplify the model training, the collected data related to locations and orientations are mapped to $\left[-0.5,0.5\right]$. {\color{black}The orientation and location of each user are, respectively, mapped by the function $\frac{\chi_{it}}{360^\circ}-0.5$ and $\frac{z}{z_{\max}}-0.5$ where $z=\left(\sum\limits_{n = 1}^{\hat x_{it}+\hat y_{it}} n\right)\times\hat y_{it}$.} From Figs. \ref{figure3} and \ref{figure4}, we observe that the proposed algorithm can predict the users' locations and orientations more accurately than the centralized ESN and deep learning algorithms. Figs. \ref{fig4b} and \ref{fig4c} also show that {\color{black}the prediction error mainly occurs} at time slot 8 to 12. This is due to the fact that the proposed algorithm can build a learning model that predicts the entire locations and orientations of each user. In particular, the output weight matrices of all ESN algorithms implemented by each BS will converge to a common matrix. Hence, BSs can predict the entire locations and orientations of each VR user.

\begin{figure}[!t]
\centering
\vspace{0cm}
\subfigure[]{
\label{figure1a} 
\includegraphics[width=8cm]{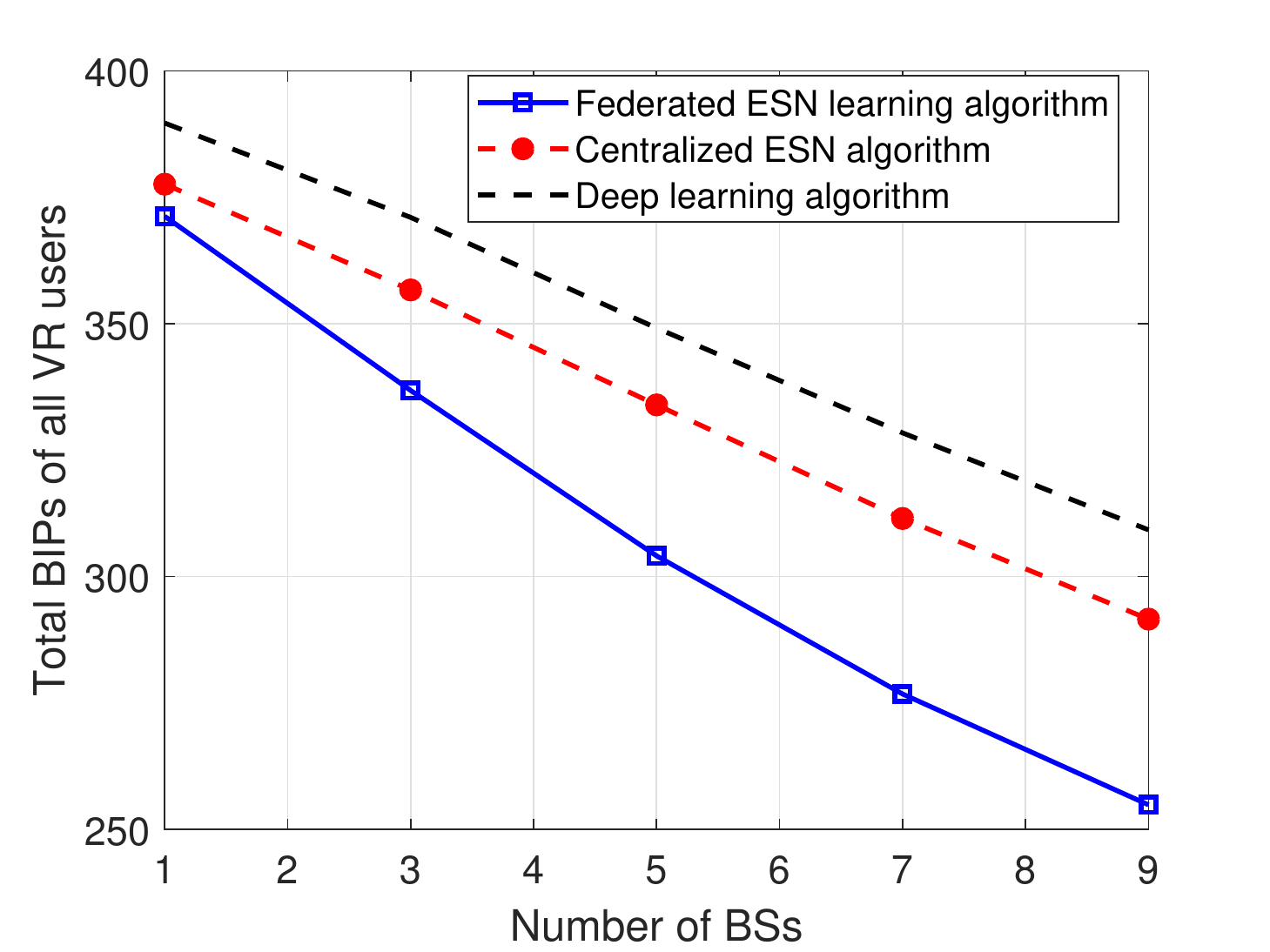}}
\subfigure[]{ 
\label{figure1b} 
\includegraphics[width=8cm]{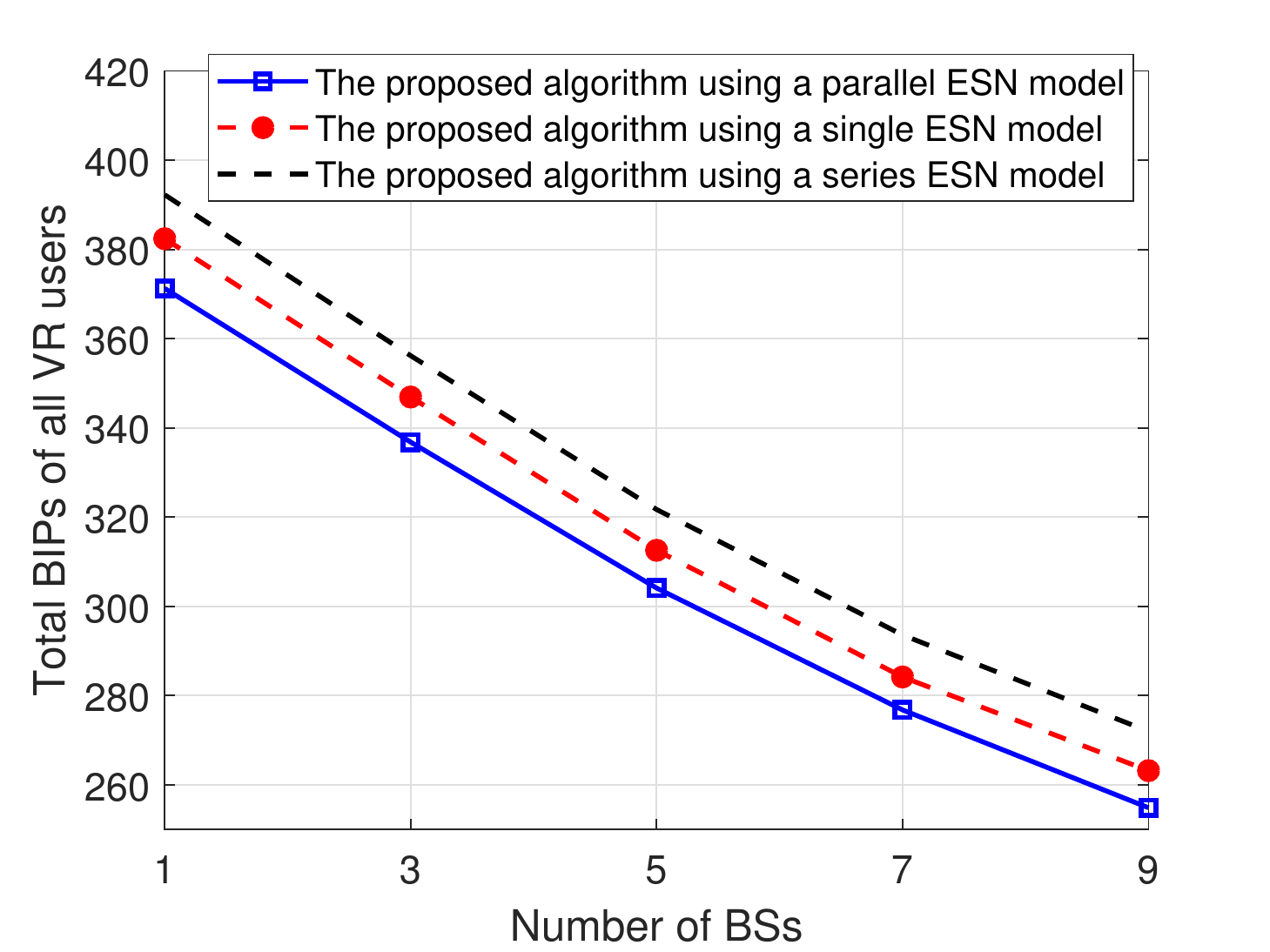}}
  \vspace{-0.8cm}
 \caption{\label{figure1}Total BIP experienced by VR users as the number of BSs varies.}
 \vspace{-0.8cm}
\end{figure}

%
Fig. \ref{figure1} shows how the total BIP of all VR users changes as the number of BSs varies. From Fig. \ref{figure1}, we can see that, as the number of BSs increases, the total BIP of all VR users decreases. That is because as the number of BSs increases, the VR users have more connection options. Hence, the blockage caused by human bodies will be less severe, thereby improving the data rates of VR users. Fig. \ref{figure1a} also shows that the proposed algorithm can achieve up to 16\% and 26\% reduction in the number of BIP, respectively, compared to centralized ESN algorithm and deep learning algorithm for a network with 9 BSs. These gains stem from the fact that the centralized ESN and deep learning algorithms can partially predict the locations and orientation of each VR user as they rely only on the local data collected by a BS. In contrast, the proposed algorithm facilitates cooperation among BSs to build a learning model that can predict the entire users' locations and orientations. Fig. \ref{figure1b} shows that the proposed algorithm using a parallel ESN model can achieve up to 8\% and 14\% gains in terms of the total BIP of all users compared to the proposed algorithm with a single ESN model and with a series model. Clearly, compared to a single ESN, using a parallel ESN model can increase the stability of the proposed algorithm. Meanwhile, the memory capacity of a parallel model is larger than a series ESN model thus improving the prediction accuracy and reducing BIP for users.

\begin{figure}[!t]
  \begin{center}
   \vspace{0cm}
    \includegraphics[width=9cm]{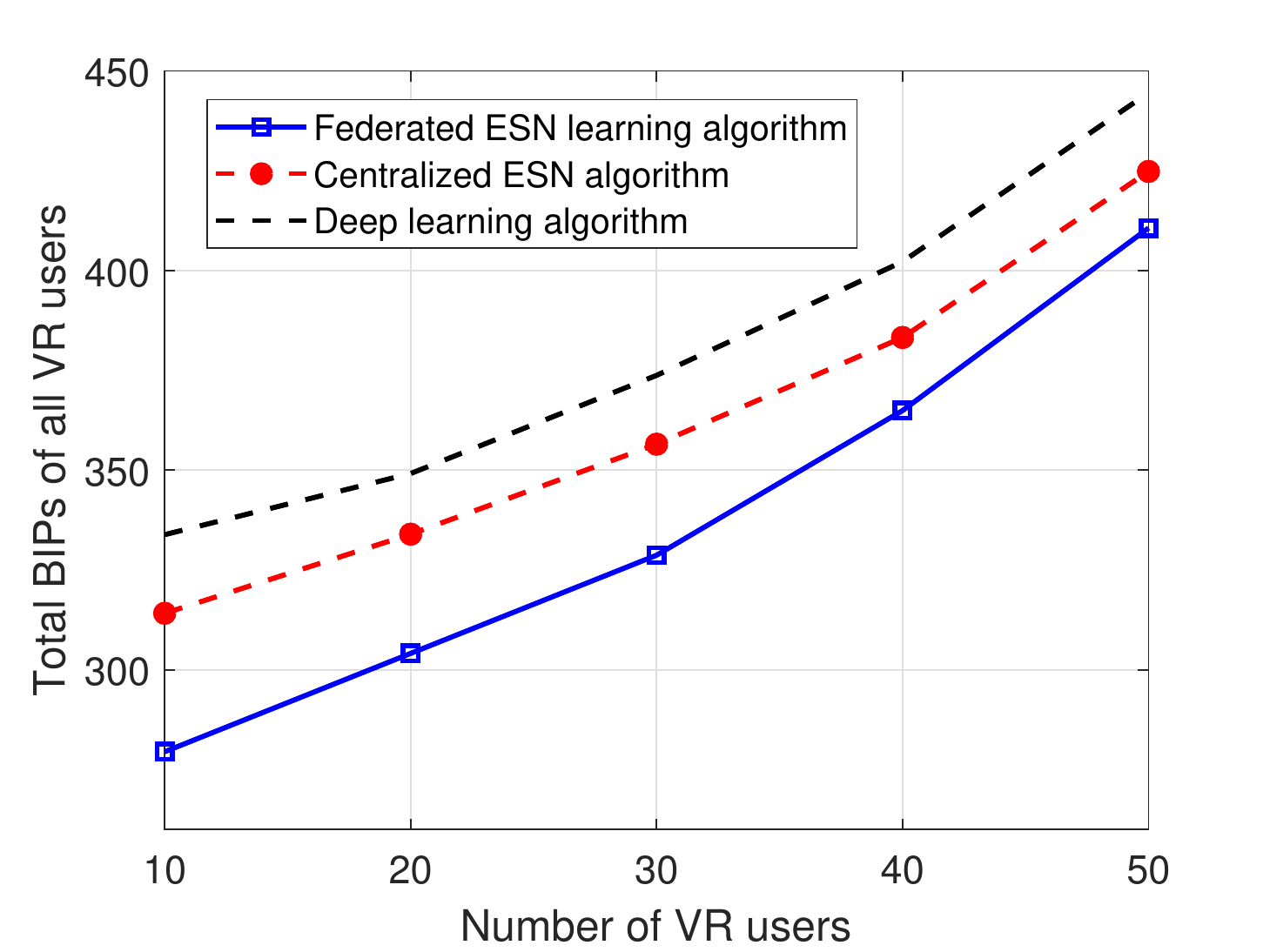}
    \vspace{-0.3cm}
    \caption{\label{figure7} Total BIP of all VR users as the number of VR users varies.}
  \end{center}\vspace{-0.8cm}
\end{figure} 

In Fig. \ref{figure7}, we show how the total BIP of all VR users changes with the number of VR users. This figure shows that, with more VR users, the total BIP of all VR users increases rapidly due to an increase in the uplink delay, as the sub-6 GHz bandwidth is shared by more users. Fig. \ref{figure7} also shows that the gap between the proposed algorithm and the centralized ESN algorithm decreases as more VR users are present in the network.. Clearly, with more VR users, it becomes more probable that a user located between a given VR user and its associated BS blocks the mmWave link. Thus, as the number of users increases, more VR users will receive their VR videos over NLoS links and, the total BIP significantly increases. 

\begin{figure}[!t]
  \begin{center}
   \vspace{0cm}
    \includegraphics[width=9cm]{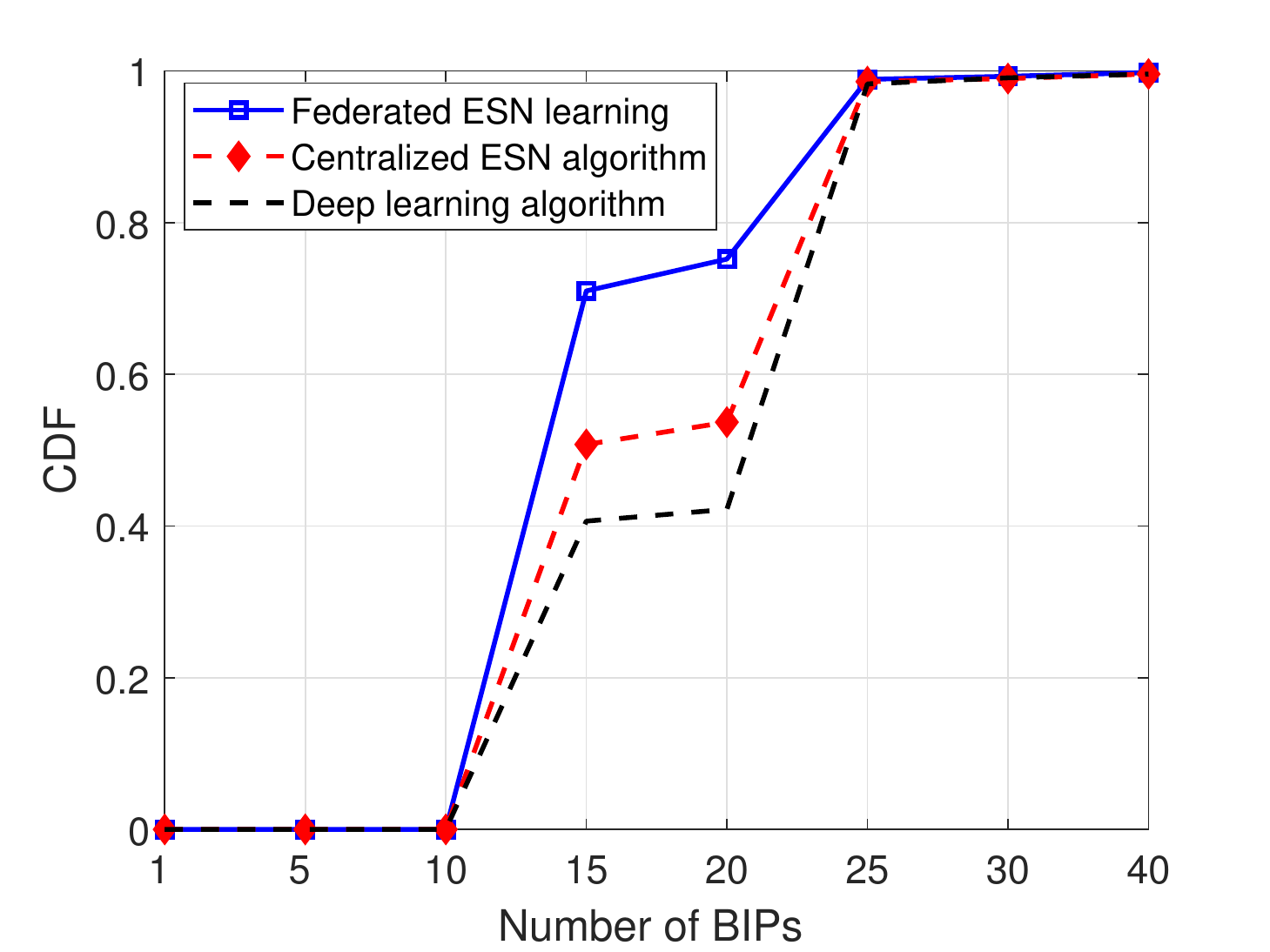}
    \vspace{-0.3cm}
    \caption{\label{figure2} CDFs of the BIP resulting from the different algorithms.}
  \end{center}\vspace{-1cm}
\end{figure}

In Fig. \ref{figure2}, we show the CDF for the VR users' BIP for all three algorithms. Fig. \ref{figure2} shows that the BIP of almost 98\% of users resulting from the considered algorithms will be larger than 10. This is due to the fact that the BIP will also be caused by other factors such as VR applications and user's awareness. In Fig. \ref{figure2}, we can also see that the proposed algorithm improves the CDF of up to {38\%} and {71\%} gains at a BIP of 25 compared to the centralized ESN and deep learning algorithms, respectively. These gains stem from the fact the ESNs are effective at analyzing the time related location and orientation data and, hence, they can accurately predict the users' locations and orientations.  

%

\begin{figure}[!t]
  \begin{center}
   \vspace{0cm}
    \includegraphics[width=9cm]{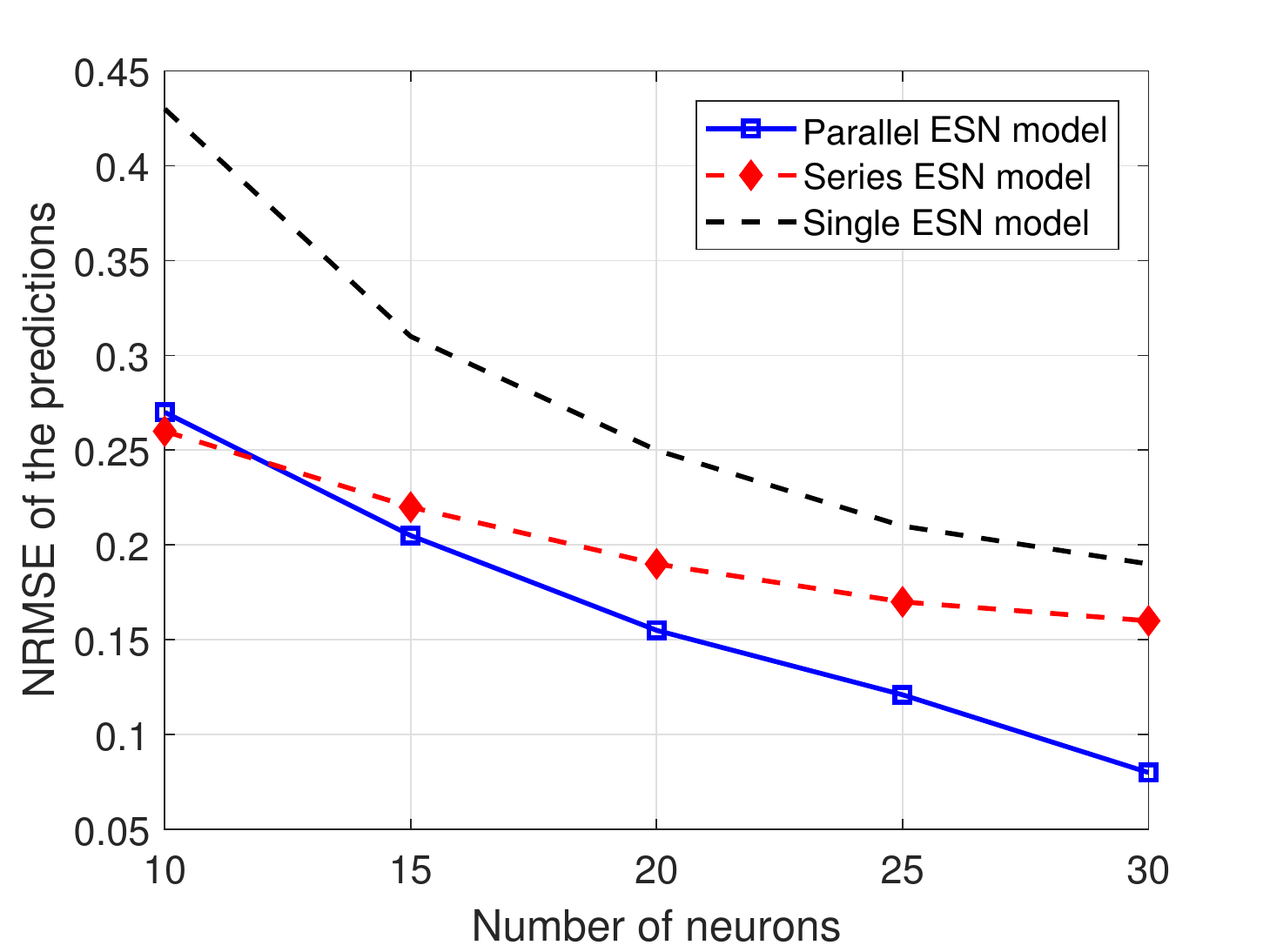}
    \vspace{-0.3cm}
   {\color{black} \caption{\label{figure6} Normalized root mean square error of the predictions as the number of neurons varies.}}
  \end{center}\vspace{-1cm}
\end{figure} 
Fig. \ref{figure6} shows how the normalized root mean square error (NRMSE) of the predictions changes as the number of neurons in each ESN model varies. In this figure, the NRMSE of the predictions is given by $\left\|\hat{\boldsymbol{y}}_{ij,t}-{\boldsymbol{e}}_{ij,t}  \right\|$. From Fig. \ref{figure6}, we can see that, with more neurons, the NRMSE of the predictions resulting from all of the considered ESN models decreases. This is because,  as the number of neurons increases, each ESN model can record more historical data related to the users' locations and orientations. Fig. \ref{figure6} also shows that the parallel model can achieve up to 37.5\% and 90\% gains in terms of NRMSE compared to the series model for the ESN models have 30 neurons. This is due to the fact that the prediction errors of a parallel ESN model is averaged over multiple outputs, thus, improving the prediction accuracy.

 \begin{figure}[!t]
  \begin{center}
   \vspace{0cm}
    \includegraphics[width=9cm]{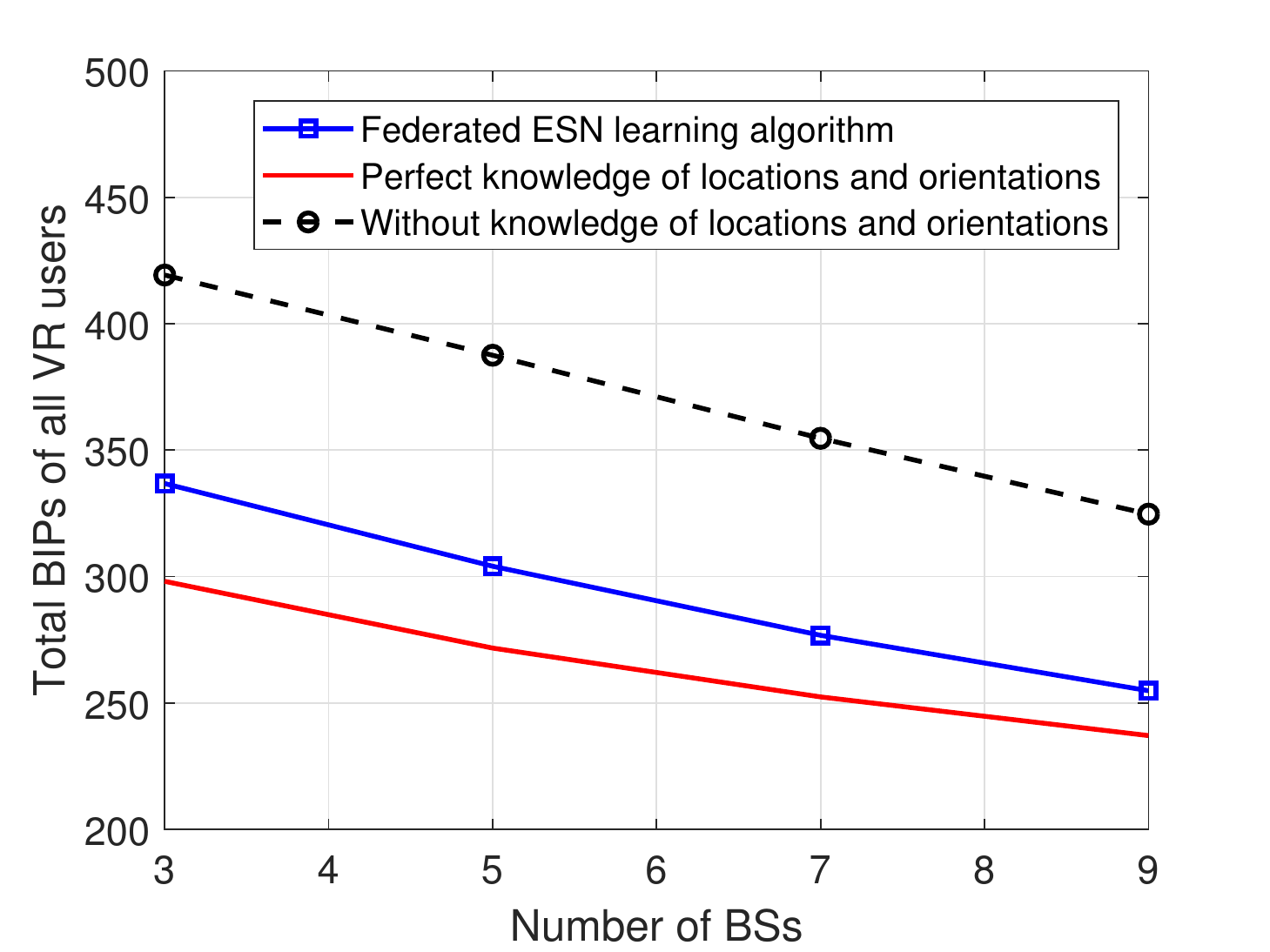}
    \vspace{-0.3cm}
    {\color{black}\caption{\label{figure11} Total BIP of all VR users as the number of BSs varies.}}
  \end{center}\vspace{-0.8cm}
\end{figure} 
 
{\color{black}Fig. \ref{figure11} shows how the total BIP of all VR users changes as the number of BSs varies. In this figure, all of the considered algorithms used the reinforcement learning algorithm given in [31] to solve the problem in (25). In the algorithm without the knowledge of the locations and orientations, the users are randomly associated with the BSs.  From Fig. \ref{figure11}, we can see that the proposed algorithm yields 23\% fewer BIP compared to the algorithm without the knowledge of the locations and orientations. This is due to the fact that the proposed algorithm uses federated ESN algorithm to predict the users' orientations and locations to optimize the user association and reduce the BIP of the users. From Fig. \ref{figure11}, we can also see that, a gap exists between the proposed algorithm and  the algorithm with the perfect knowledge of the users' orientations and locations. This gap stems from the prediction inaccuracy caused by the proposed algorithm.}
 

   \section{Conclusion}
  In this paper, we have developed a novel framework for minimizing BIP within VR applications that operate over wireless networks.
   To this end, we have developed a BIP model that jointly considers the VR applications, transmission delay, VR video quality, and the user's awareness. 
   We have then formulated an optimization problem that seeks to minimize the BIP of VR users by predicting users' locations and orientations, as well as determining the user association. To solve this problem, we have developed a novel federated learning algorithm based on echo state networks. The proposed federated ESN algorithm enables the BSs to train their ESN with their locally collected data and share these models to build a global learning model that can predict the entire locations and orientations of each VR user. To improve the prediction accuracy of the proposed algorithm, we derive a closed-form expression of the memory capacity for ESNs to determine the number of neurons in each ESN model and the values of the recurrent weight matrix. Using these predictions, each BS can determine the user association in both uplink and downlink. Simulation results have shown that, when compared to the centralized ESN and deep learning algorithms, the proposed approach achieves significant performance gains of BIP.

  \section*{Appendix}
\subsection{Proof of Theorem \ref{th:1}}\label{Ap:a}  
Given the input stream vector $\boldsymbol{m}_{\ldots t}=\left[m_1,\ldots,m_{t-1},m_t\right]$, the activations $\boldsymbol{\mu}_{j,t}$ in (\ref{eq:state}) of the reservoir neuron in ESN model $l$ at time $t$ can be given by:
  \begin{equation}\label{eq:mu}\small
    \begin{aligned}
      \mu_{j1,t}^{\left(l\right)}&=w_{1,l}^{\textrm{in}}m_t+ww_{N_W,l}^{\textrm{in}}m_{t-1}+w^2w_{N_W-1,l}^{\textrm{in}}m_{t-2}+\cdots +w^{N_W-1}w_{2,l}^{\textrm{in}}m_{t-(N_W-1)}\\
                  &+w^{N_W} w_{1,l}^{\textrm{in}} m_{t-N_W}+w^{N_W+1}w_{N_W,l}^{\textrm{in}}m_{t-(N_W+1)}+\cdots+w^{2N_W-1}w_{2,l}^{\textrm{in}}m_{t-(2N_W-1)}+w^{2N_W}w_{1,l}^{\textrm{in}}m_{t-2N_W}\\
                  &+w^{2N_W+1}w_{N_W,l}^{\textrm{in}}m_{t-(2N_W+1)}+\cdots,
    \end{aligned}
    \vspace{-0.5cm}
  \end{equation}
    \begin{equation}\nonumber
     \vdots 
   \end{equation} 
  \begin{displaymath}\small
    \begin{aligned}
 \!\!\!\!\! \!\!\!\!\! \!\!\!\!\! \!\!\!\!\! \!\!\!\!\! \!\!\!\!\! \!\!\!\!\! \!\!\!\!\! \!\!\!\!\! \!\!\!\!\! \!\!\!\!\! \!\!\!\!\! \!\!\!\!\! \!\!\!\!\! \!\!\!\!\! \!\!\!\!\! \!\!\!\!\!\mu_{jN_W,t}^{(l)}&=  w_{N_W,l}^{\textrm{in}}m_{t}+ww_{N_W-1,l}^{\textrm{in}}m_{t-1}+w^2w_{N_W-2,l}^{\textrm{in}}m_{t-2}+\cdots +w^{N_W-1}w_{1,l}^{\textrm{in}}m_{t-(N_W-1)}\;\;\;\;\;\;\;\;\;\;\;\;\;\;\;\\
                  &+w^{N_W}w_{N_W,l}^{\textrm{in}}m_{t-N_W}+w^{N_W+1}w_{N_W-1,l}^{\textrm{in}}m_{t-(N_W+1)}+\cdots+w^{2N_W}w_{N_W,l}^{\textrm{in}}m_{t-2N_W}\\
                  &+w^{2N_W+1}w_{N_W-1,l}^{\textrm{in}}m_{t-(2N_W+1)}+\cdots,
    \end{aligned}
  \end{displaymath}
  where $w$ is an element of the recurrent matrix $\boldsymbol{W}$ which is assumed to be equal for all of the ESN models.  The output weight matrix of each ESN model $l$ can be given by
  \begin{displaymath}
    {\boldsymbol{W}^\textrm{out}_l}={\boldsymbol{R}_l}^{-1}\boldsymbol{p}_{k,l},
  \end{displaymath}
  where ${\boldsymbol{R}_l}={ {\mathbb E}}\left[\boldsymbol{\mu}_t^{\left(l\right)}\left({\boldsymbol{\mu}}_t^{\left(l\right)}\right)^{\rm T}\right]$ represents the covariance matrix with $\boldsymbol{\mu}_t^{\left(l\right)}=\left[ \mu_{1,t}^{\left(1\right)},\ldots, \mu_{N_W,t}^{\left(l\right)}\right]$  and $\boldsymbol{p}_{k,l}={\mathbb E}\left[\boldsymbol{\mu}^{\left(l\right)}_tm_{t-k}\right]$.
The element $\boldsymbol{{R}}_{l,12}$ in ${\boldsymbol{R}}_l$ can be calculated as follows
  \begin{equation}\small\label{eq:R}
    \begin{aligned}
        \boldsymbol{{R}}_{l,12}=&{\mathop {\mathbb E}}\left[\mu^{\left(l\right)}_{2,t}\mu^{\left(l\right)}_{1,t}\right]
        				 ={\mathop {\mathbb E}}[w_{1,l}^{\textrm{in}}w_{2,l}^{\textrm{in}}m^2_t+w^2w_{N_W,l}^{\textrm{in}}w_{1,l}^{\textrm{in}}m^2_{t-1}+\cdots 
                         +w^{2(N_W-1)}w_{2,l}^{\textrm{in}}w_{3,l}^{\textrm{in}}m^2_{t-(N_W-1)}\\
                         &+w^{2N_W}w_{1,l}^{\textrm{in}}w_{2,l}^{\textrm{in}}m^2_{t-N_W}+\cdots +w^{2(2N_W-1)}w_{2,l}^{\textrm{in}}w_{3,l}^{\textrm{in}}m^2_{t-(2N_W-1)}
                         +w^{4N_W}w_{1,l}^{\textrm{in}}w_{2,l}^{\textrm{in}}m^2_{t-2N_W}+\cdots ]\\
                         =&\sigma^2(w_{1,l}^{\textrm{in}}w_{2,l}^{\textrm{in}}+w^2w_{N_W,l}^{\textrm{in}}w_{1,l}^{\textrm{in}}+\cdots +w^{2(N_W-1)}w_{2,l}^{\textrm{in}}w_{3,l}^{\textrm{in}}
                         +w^{2N_W}w_{1,l}^{\textrm{in}}w_{2,l}^{\textrm{in}}+w^{2(N_W+1)}w_{N_W,l}^{\textrm{in}}w_{1,l}^{\textrm{in}}+\cdots \\
                         &+w^{2(2N_W-1)}w_{2,l}^{\textrm{in}}w_{3,l}^{\textrm{in}}+w^{4N_W}w_{1,l}^{\textrm{in}}w_{2,l}^{\textrm{in}}+\cdots )\\
                         =&\sigma^2\sum_{j=0}^{\infty}w^{2N_Wj}\left(w_{1,l}^{\textrm{in}}w_{2,l}^{\textrm{in}}+w^2w_{N_W,l}^{\textrm{in}}w_{1,l}^{\textrm{in}}+\cdots 
                         +w^{2(N_W-1)}w_{2,l}^{\textrm{in}}w_{3,l}^{\textrm{in}}\right)\\
                         =&\frac{\sigma ^2}{1-w^{2N_W}}\left({\mathop {\rm {rot}}}_1\left(\boldsymbol{w}^\textrm{in}_{N_W\ldots 1,l}\right)\right)^{\rm T}\Gamma ^2{\mathop {\rm {rot}}}_2(\boldsymbol{w}^\textrm{in}_{N_W \ldots 1,l}),
    \end{aligned}
  \end{equation}
  where $\sigma ^2$ is the variance of the input signal $m_t$. $\boldsymbol{w}^\textrm{in}_{N_W \ldots 1,l}=\left[ w_{N_W,l}^{\textrm{in}}, w_{N_W-1,l}^{\textrm{in}}, \ldots, w_{1,l}^{\textrm{in}} \right]$ and ${\mathop {\rm {rot}}}_k\left( \boldsymbol{w}^\textrm{in}_{N_W\ldots 1,l} \right)$ denote an operator that rotates vector $\boldsymbol{w}^\textrm{in}_{N_W\ldots 1,l}$ by $k$ place to the right. For example,  ${\mathop {\rm {rot}}}_1\left( \boldsymbol{w}^\textrm{in}_{N_W\ldots 1,l} \right)=\left[ w_{1,l}^{\textrm{in}}, w_{N_W,l}^{\textrm{in}}, \ldots, w_{2,l}^{\textrm{in}} \right]$. $\Gamma=\diag \left(1,w, \ldots, w^{N_W-1}\right)$. The element $  \boldsymbol{{R}}_{l,ij}$ can be given by:   
  \begin{equation}
  \setlength{\abovedisplayskip}{3 pt}
\setlength{\belowdisplayskip}{3 pt}
    {\boldsymbol{R}}_{l,ij}=\frac{\sigma ^2}{1-w^{2N_W}}({\mathop {\rm {rot}}}_i(\boldsymbol{w}^\textrm{in}_{N_W\ldots 1,l}))^{\rm T}\Gamma ^2{\mathop {\rm {rot}}}_j(\boldsymbol{w}^\textrm{in}_{N_W\ldots 1,l}).
  \end{equation}
  Thus, ${ \boldsymbol{R}_l}=\frac{\sigma ^2}{1-w^{2N_W}}\boldsymbol{V}_l^{\rm T}\Gamma ^2\boldsymbol{V}_l=\frac{\sigma ^2}{1-w^{2N_W}}{\boldsymbol{A}}_l\label{parallel_R}$ where ${\boldsymbol{A}}_l=\boldsymbol{V}_l^{\rm T}\Gamma ^2\boldsymbol{V}_l $. Similarly, based on (\ref{eq:R}),  element $p_{k1,l}$ is given by:
  \begin{equation}
    \setlength{\abovedisplayskip}{3 pt}
\setlength{\belowdisplayskip}{3 pt}
  	\begin{aligned}
  		p_{k1,l}=&{\mathop {\mathbb E}}\left[{\mu}_{1,t}^{\left(l\right)}m_{t-k}\right]
  	                     =&\sigma ^2w^{k}w_{\left(N_W-k+1\right){\mathop {\rm mod}N_W},l}^\textrm{in}.
    \end{aligned}
  \end{equation}
  We assume that $w_{0,l}^\textrm{in}=w_{N_W,l}^\textrm{in}$. Hence, $\boldsymbol{p}_{k,l}=\sigma ^2w^{k}{\mathop {\rm rot}}_k\left(\boldsymbol{w}_{1\ldots N_W,l}^\textrm{in}\right)\label{parallel_pk}{\rm .}$ Then, the output weight matrix of each ESN model $l$ is ${\boldsymbol{W}}_l^\textrm{out}=(1-w^{2N_W})w^{k}{\boldsymbol A}_{l}^{-1}{\mathop {\rm rot}}_{k}(\boldsymbol{w}_{1\ldots N_W,l}^\textrm{in})$. 

  The output of each ESN model $l$ at time $t$ can be given by $z_{l,t}=(\boldsymbol{\mu}^{\left(l\right)}_{t})^{\rm T} {\boldsymbol W}_l^\textrm{out}=(1-w^{2N_W})w^{k} (\boldsymbol{\mu}^{\left(l\right)}_{t})^{\rm T}  {\boldsymbol A}_{l}^{-1}{\mathop {\rm rot}}_{k}(\boldsymbol{w}_{1\ldots N_W,l}^\textrm{in})$ and $z_t=\sum_{l=1}^{L}z_{l,t}$. Then, the covariance of the output with the $k$-slot delayed input can be calculated by:
  \begin{equation}\label{eq:cov}
  	\begin{aligned}
  		\textrm{Cov}(z_t,m_{t-k})=&\sum_{l=1}^{L}(1-w^{2N_W})w^{k}\textrm{Cov}\left(\left(\boldsymbol{\mu}^{\left(l\right)}_t \right)^{\rm T},m_{t-k}\right) {\boldsymbol A}_{l}^{-1}{\mathop {\rm rot}}_{k}(\boldsymbol{w}_{1\ldots N_W,l}^\textrm{in})\\									
  										=&L(1-w^{2N_W})w^{2k}\sigma ^2\left({\mathop {\rm rot}}_{k}(\boldsymbol{w}_{1\ldots N_W,l}^\textrm{in})\right)^{\rm T} {\boldsymbol A}_{l}^{-1}{\mathop {\rm rot}}_{k}(\boldsymbol{w}_{1\ldots N_W,l}^\textrm{in})\\  						
  										\mathop  = \limits^{\left( a \right)}  &L(1-w^{2N_W})w^{2k}\sigma ^2\zeta _k,
  	\end{aligned}
  \end{equation}
  where (a) is obtained from the fact that $ \zeta _k= \left({\mathop {\rm rot}}_{k}(\boldsymbol{w}_{1\ldots N_W,l}^\textrm{in})\right)^{\rm T} {\boldsymbol A}_{l}^{-1}{\mathop {\rm rot}}_{k}(\boldsymbol{w}_{1\ldots N_W,l}^\textrm{in})$. The variance of the output can be given by:
    \begin{equation}\label{eq:var}
  	\begin{aligned}
  		\textrm{Var}(z_t)=&\mathbb{E}\left[ \sum_{l=1}^{L} z_{l,t} \sum_{p=1}^{L} z_{p,t} \right]-\left( \mathbb{E}\left[\sum_{l=1}^{L} z_{l,t} \right]\right)^2= \sum_{l=1}^{L}\sum_{p=1}^{L} \mathbb{E}\left[z_{l,t} z_{p,t} \right].\\
	  	\end{aligned}
  \end{equation}	
Since $\mathbb{E}\left[z_{p,t}z_{l,t} \right]=(1-w^{2N_W})w^{2k}\sigma ^2\zeta _k$, we have $\textrm{Var}(z_t)=L^2(1-w^{2N_W})w^{2k}\sigma ^2\zeta _k$.		


 The memory capacity of the parallel ESN model  can be given by
  \begin{equation}\label{eq:M}\small
  	\begin{aligned}
	  	M&= \sum_{k=1}^{\infty}\frac{\textrm{Cov}^2(m_{t-k},z_t)}{\textrm{Var}(m_t)Var(z_t)}=\frac{1}{L}\sum_{k=0}^{\infty} \frac{\textrm{Cov}(m_{t-k},z_t)}{\sigma^2}-\frac{\textrm{Cov}(m_{t},z_t)}{L\sigma^2}\\
	  	  &=\frac{1}{L}\sum_{k=0}^{\infty}\sum_{l=1}^{L}(1-w^{2N_W})w^{2k}\zeta _k-\frac{1}{L}\sum_{l=1}^{L}(1-w^{2N_W})\zeta _0\\
		  &=\frac{1}{L}\sum_{l=1}^{L}(1-w^{2N_W})\sum_{k=0}^{\infty} w^{2k}\zeta _k - \frac{1}{L}\sum_{l=1}^{L}(1-w^{2N_W})\zeta _0   \\
		  &= \frac{1}{L}\sum_{l=1}^{L}(1-w^{2N_W}) \left[ \sum_{k=0}^{N_W-1} w^{2k}\zeta _k+\sum_{k=N_W}^{2N_W-1} w^{2k}\zeta _k + \dots \right] -\frac{1}{L} \sum_{l=1}^{L}(1-w^{2N_W})\zeta _0 \\
		  &=\frac{1}{L} \sum_{l=1}^{L}(1-w^{2N_W}) \left(\sum_{k=0}^{N_W-1} w^{2k}\zeta _k \right) \left( \sum_{k=0}^{\infty} w^{2N_Wk} \right)- \frac{1}{L}\sum_{l=1}^{L}(1-w^{2N_W})\zeta _0\\
		  &= \frac{1}{L}\sum_{l=1}^{L}\left(\sum_{k=0}^{N_W-1} w^{2k}\zeta _k-(1-w^{2N_W})\zeta _0 \right) \mathop  = \limits^{\left( a \right)}  \frac{1}{L}\sum_{l=1}^{L}\left(\sum_{k=1}^{N_W-1} w^{2k}\zeta _k+w^{2N_W}\zeta _0 \right) \\
	  	  &=\frac{1}{L}\sum_{l=1}^{L}\left(\sum_{k=1}^{N_W-1} w^{2k}\zeta _k+w^{2N_W}\zeta _{N_W} \right)=\frac{1}{L}\sum_{l=1}^{L}\left(\sum_{k=1}^{N_W} w^{2k}\zeta _k \right)\mathop  = \limits^{\left( b \right)} N_W-1+w^{2N_W},
  	\end{aligned}
  \end{equation}
  where (a) is obtained from the fact that $\zeta _0=\zeta _{N_W}$ and (b) stems from the fact that $w_l^{2k}\zeta _k=1$ as $k=1,\ldots, N_W-1$ and $w^{2N_W}\zeta _{N_W}=w^{2N_W}$. This completes the proof.
  
\subsection{Proof of Theorem \ref{th:2}}\label{Ap:b} 

Let $\boldsymbol{m}_{\ldots t}=\left[m_1,\ldots,m_{t-1},m_t \right]$ be the input steam vector and $z_{t}^{\left(l\right)}$ be the output of ESN model $l$. Next, we derive the memory capacity of a series ESN model using an enumeration method. First, according to (\ref{eq:mu})-(\ref{eq:var}), we have  $\textrm{Cov}(z_t^{\left(1\right)},m_{t-k})=(1-w^{2N_W})w^{2k}\sigma ^2\zeta _k$. Given the output of the first ESN model, $z_t^{\left(1\right)}$, which is the input of the second ESN model, then $\textrm{Cov}(z_t^{\left(2\right)},m_{t-k})=  \left(1-w^{2N_W}\right)^2w^{2k} \zeta _k\sigma ^2 $. Similarly, we can obtain that $\textrm{Cov}(z_t^{\left(3\right)},m_{t-k})=  \left(1-w^{2N_W}\right)^3w^{2k} \zeta _k  \sigma ^2 $. Therefore, we can conclude that $\textrm{Cov}(z_t^{\left(L\right)},m_{t-k})= \left(1-w^{2N_W}\right)^L w^{2k} \zeta _k$. Based on (\ref{eq:M}), the memory capacity of a series ESN can be given by $\sum_{k=0}^{\infty} \left( 1-w^{2N_W}\right)^Lw^{2k} \zeta _k  - \left( 1-w^{2N_W}\right)^L\zeta _0$. Then, the memory capacity of a series ESN is 
 \begin{equation}\label{eq:M}\small
  	\begin{aligned}
	  	M&=\sum_{k=0}^{\infty}  \left(1-w^{2N_W}\right)^Lw^{2k} \zeta _k -  \left(1-w^{2N_W}\right)^L \zeta _0 = \left( 1-w^{2N_W}\right)^L \sum_{k=0}^{\infty}  w^{2k}\zeta _k - \left(1-w^{2N_W}\right)^L \zeta _0 \\
	 &= \left( 1-w^{2N_W}\right)^L \left[  \sum_{k=0}^{N_W-1}   w^{2k}\zeta _k  +\sum_{k=N_W}^{2N_W-1} w^{2k}\zeta _k+ \dots\right] - \left(1-w^{2N_W}\right)^L \zeta _0 \\	  
		  &= \left( 1-w^{2N_W}\right)^L \left(\sum_{k=0}^{N_W-1} w^{2k}\zeta _k  \right) \left( \sum_{k=0}^{\infty} w^{2N_Wk} \right)-  \left(1-w^{2N_W}\right)^L \zeta _0 \\
		  &=  \left( 1-w^{2N_W}\right)^L \left( \frac{\sum_{k=0}^{N_W-1} \left(w^{2k}\zeta _k\right)^L}{{1-w^{2N_W}}}-\frac{\zeta_0-\zeta_0w^{2N_W}}{1-w^{2N_W}} \right) = {\left(1-w_l^{2N_W}\right)}^{L-1}\left(N_W-1+{w^{2N_W}}\right).
  	\end{aligned}
  \end{equation}
 This completes the proof.
 
 \subsection{Proof of Theorem \ref{th:3}}\label{Ap:c} 
 
 The memory capacity of a single ESN with multiple inputs is derived using an enumeration method. Consider $K=2$, then the input stream will be $\boldsymbol{m}_{\ldots t}=\left[\boldsymbol{m}_{\ldots 1t},\boldsymbol{m}_{\ldots 2t} \right]$ where $\boldsymbol{m}_{\ldots kt}=m_{k1}\ldots m_{kt-2}m_{kt-1}m_{kt}$. Based on the proof of Theorems \ref{th:1} and \ref{th:2},  ${\boldsymbol R}$ can be given by: 
  \begin{equation}
  	\begin{aligned}
  	{\boldsymbol R}&=\frac{\sigma _1^2+2\rho_{12} \sigma _1\sigma _2+\sigma _2^2}{1-w^{2N_W}}\boldsymbol{V} ^{\rm T}\Gamma ^2\boldsymbol{V} =\frac{\sigma _1^2+2\rho_{12} \sigma _1\sigma _2+\sigma _2^2}{1-w^{2N_W}}{\boldsymbol A}\label{multi_R},
  	\end{aligned}
  \end{equation}
  and $\boldsymbol p_k=w^k\left(\sigma _1^2+\sigma _2^2\right)\mathop{\rm rot}\nolimits _k\left(\boldsymbol V_{1\ldots N}\right)\label{multi_pk}$. Then the output weight matrix can be given by:
  \begin{equation}
	{\boldsymbol W}^\textrm{out}=\left(1-w^{2N_W}\right)w^{k}\frac{\sigma _1^2+\sigma _2^2}{\sigma _1^2+2\rho_{12} \sigma _1\sigma _2+\sigma _2^2}{\boldsymbol A}^{-1}{\mathop {\rm rot}}_{k}\left(\boldsymbol V_{1\ldots N}\right){\rm .}
  \end{equation}
  The output at time $t$ is
  \begin{equation}
  	\begin{aligned}
  		z_t=\left(\boldsymbol{m}_t\right)^{\rm T}{\boldsymbol W}^\textrm{out}=\left(1-w^{2N_W}\right)w^{k}\frac{\sigma _1^2+\sigma _2^2}{\sigma _1^2+2\rho_{12} \sigma _1\sigma _2+\sigma _2^2}\left(\boldsymbol{m}_t\right)^{\rm T}{\boldsymbol A}^{-1}{\mathop {\rm rot}}_{k}\left(\boldsymbol V_{1\ldots N}\right)\label{multi_yt}.
  	\end{aligned}
  \end{equation}
The covariance of the output at time $t$ and $t-k$ can be given by:
  \begin{equation}\nonumber
  	\begin{aligned}
  		\textrm{Cov}&\left(z_t,\boldsymbol{m}_{t-k}\right)
  			=\left(1-w^{2N_W}\right)w^{k}\frac{\sigma _1^2+\sigma _2^2}{\sigma _1^2+2\rho_{12} \sigma _1\sigma _2+\sigma _2^2}\textrm{Cov}\left(\left(\boldsymbol{\mu}_t\right)^{\rm T}, \boldsymbol{m}_{t-k}\right)
  			\times {\boldsymbol A}^{-1}{\mathop {\rm rot}}_{k}\left(\boldsymbol V_{1\ldots N}\right)\\
  			=&\left(1-w^{2N_W}\right)w^{2k}\frac{\left(\sigma _1^2+\sigma _2^2\right)^2}{\sigma _1^2+2\rho_{12} \sigma _1\sigma _2+\sigma _2^2}\zeta _{k}{\rm .}
  	\end{aligned}
  \end{equation}	
Based on (\ref{eq:var}),  $\textrm{Var}\left(z_t\right)=\textrm{Cov}\left(z_t, \boldsymbol{m}_{t-k}\right).$ Then, the memory capacity can be given by

  \begin{equation}\small
  	\begin{aligned}
	  	M=&\sum_{k=0}^{\infty} \frac{\textrm{Cov}^2\left(z_t, \boldsymbol{m}_{t-k}\right)}{\textrm{Var}\left(\boldsymbol{m}_{t-k}\right)\textrm{Var}\left(z_t\right)} -\frac{\textrm{Cov}^2\left(z_t, \boldsymbol{m}_{t}\right)}{\textrm{Var}\left(\boldsymbol{m}_{t}\right)\textrm{Var}\left(z_t\right)} 
	  	  =\sum_{k=0}^{\infty} \frac{  \textrm{Cov}\left(z_t, \boldsymbol{m}_{t-k}\right)  }{\sigma _1^2+2\rho_{12}\sigma _1\sigma _2+\sigma _2^2}-\frac{  \textrm{Cov}\left(z_t, \boldsymbol{m}_{t}\right)  }{\sigma _1^2+2\rho_{12}\sigma _1\sigma _2+\sigma _2^2}\\
	  	  =&\left( \frac{\sigma _1^2+\sigma _2^2}{\sigma _1^2+2\rho_{12} \sigma _1\sigma _2+\sigma _2^2}\right) ^2\times\left(1-w^{2N_W}\right)\sum_{k=0}^{N_W-1}w^{2k}\zeta _k\sum_{j=0}^{\infty}r^{2N_Wj}-\left(1-w^{2N_W}\right)\\
	  	  =&\left( \frac{\sigma _1^2+\sigma _2^2}{\sigma _1^2+2\rho_{12} \sigma _1\sigma _2+\sigma _2^2}\right) ^2\sum_{k=0}^{N_W-1}w^{2k}\zeta _k-(1-w^{2N_W})=\left( \frac{\sigma _1^2+\sigma _2^2}{\sigma _1^2+2\rho_{12}\sigma _1\sigma _2+\sigma _2^2}\right) ^2\left(N_W-1+w^{2N_W}\right).
  	\end{aligned}
  \end{equation}
  Similarly, we can formulate the memory capacity of the single ESN with input vector $\boldsymbol{m}_t=\left[\boldsymbol{m}_{1t},\boldsymbol{m}_{2t},\boldsymbol{m}_{3t}\right]$, which is given by $\left( \frac{\sigma _1^2+\sigma _2^2+\sigma _3^2}{\sigma _1^2+2\rho_{12}\sigma _1\sigma _2+2\rho_{13}\sigma _1\sigma _3+2\rho_{23}\sigma _2\sigma _3+\sigma _2^2+\sigma _3^2}\right) ^2\left(N_W-1+w^{2N_W}\right).$ In consequence, the memory capacity of a single ESN with input vector $\boldsymbol{m}_t=\left[\boldsymbol{m}_{1t}, \ldots,\boldsymbol{m}_{Kt}\right]$ can be given by $ \left(\frac{\sum_{l=1}^{K}\sigma _l^2}{\sum_{k=1}^{K}\sum_{n=1}^{K}\rho _{kn}\sigma _k\sigma _n}\right)^2\left(N_W-1+w^{2N_W}\right)$. This completes the proof.

\bibliographystyle{IEEEbib}
\def\baselinestretch{1.38}
\bibliography{references}
\end{document}